\newtheorem{definition}{Definition}
\newtheorem{theorem}{Theorem}
\newtheorem{lemma}{Lemma}
\newtheorem{subroutine}{Subroutine}
\newtheorem{property}{Property}
\definecolor{rrrrrr}{rgb}{0.75,0,0}
\definecolor{gggggg}{rgb}{0,0.75,0}
\definecolor{bbbbbb}{rgb}{0,0,0.75}
\newcommandx{\fix}[1]{\todo[inline,caption={}]{Fix: #1}}
\newcommandx{\missing}[1]{\todo[inline,caption={}]{Missing: #1}}
\title{Computing the Yolk in Spatial Voting Games without Computing Median Lines}
\author{
	Joachim Gudmundsson and Sampson Wong \\
	University of Sydney \\ 
	Sydney, Australia \\
	joachim.gudmundsson@sydney.edu.au, swon7907@sydney.edu.au
}
\begin{document}
\maketitle
\begin{abstract}
The yolk is an important concept in spatial voting games: the yolk center generalises the equilibrium and the yolk radius bounds the uncovered set. We present near-linear time algorithms for computing the yolk in the plane. To the best of our knowledge our algorithm is the first that does not precompute median lines, and hence is able to break  the best known upper bound of~$O(n^{4/3})$ on the number of limiting median lines. We avoid this requirement by carefully applying Megiddo's parametric search technique, which is a powerful framework that could lead to faster algorithms for other spatial voting problems.
\end{abstract}

\section{Introduction}
\label{sec:introduction}

Voting theory is concerned with preference aggregation and group decision making. A classic framework for aggregating voter's preferences is the Downsian~\cite{downs1957economic}, or spatial model of voting. 

In this model, voters are positioned on a `left-right' continuum along multiple ideological dimensions, such as economic, social or religious. These dimensions together form the policy space. Each voter is required to choose a single candidate from a set of candidates, and a common voter preference function is a metric/distance function within the policy space. An intuitive reason behind using metric preferences is that voters tend to prefer candidates ideologically similar to themselves.

The spatial model of voting with metric preferences have been studied extensively, both theoretically~\cite{MCKELVEY1976472,mckelvey1986covering,enelow1984spatial,miller1989geometry,tovey1993some} and empirically~\cite{poole1984polarization,poole1991patterns,poole2001d,ordeshook1993spatial,schofield2003critical,schofield2004equilibrium,schofield2007spatial}. Recently, lower bounds were provided on the distortion of voting rules in the spatial model, and interestingly, metrics other than the Euclidean metric were considered~\cite{DBLP:conf/aaai/AnshelevichBP15,DBLP:conf/aaai/SkowronE17,DBLP:conf/sigecom/GoelKM17}.

We focus our attention on two-candidate spatial voting games, where the winner is the candidate preferred by a simple majority of voters. In a one dimension policy space, Black's Median Voter Theorem~\cite{black1948rationale} states that a voting equilibrium (alt. Condorcet winner, plurality point, pure Nash equilibrium) is guaranteed to exist and coincides with the median voter. 

Naturally, social choice theorists searched for the equilibrium in the two dimensional policy space, but these attempts were shown to be fruitless \cite{plott1967notion}. The initial reaction was one of cynicism \cite{MCKELVEY1976472}, but in response a multitude of generalisations were developed, with the yolk being one such concept~\cite{mckelvey1986covering,miller1989geometry}. The yolk in the Euclidean~$\mathcal L_2$ metric is defined as the minimum radius disk that intersects all median lines of the voters.

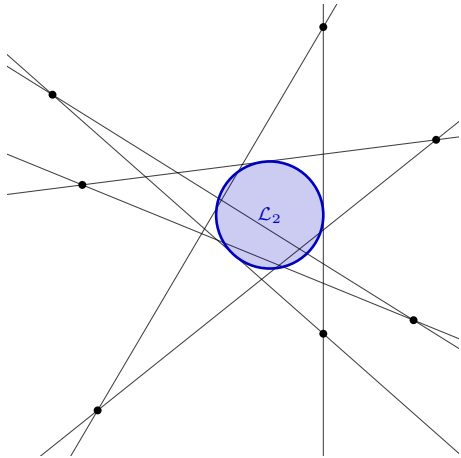
\begin{figure}[ht]
	\centering
	\begin{tikzpicture}[line cap=round,line join=round,>=triangle 45,x=0.6cm,y=0.6cm]
	\clip(-5,-5) rectangle (5,5);

	\begin{scriptsize}
	\fill [color=black] (-4,3) circle (1.5pt);
	\fill [color=black] (-3,-4) circle (1.5pt);
	\fill [color=black] (2,4.5) circle (1.5pt);
	\fill [color=black] (2,-2.3) circle (1.5pt);
	\fill [color=black] (-3.34,1) circle (1.5pt);
	\fill [color=black] (4,-2) circle (1.5pt);
	\fill [color=black] (4.5,2) circle (1.5pt);
	\draw [opacity=0.75, shorten >=-10cm, shorten <=-10cm]  (-4, 3)--(2, -2.3) ;
	\draw [opacity=0.75, shorten >=-10cm, shorten <=-10cm]  (-4, 3)--(4, -2) ;
	\draw [opacity=0.75, shorten >=-10cm, shorten <=-10cm]  (-3, -4)--(2, 4.5) ;
	\draw [opacity=0.75, shorten >=-10cm, shorten <=-10cm]  (-3, -4)--(4.5, 2) ;
	\draw [opacity=0.75, shorten >=-10cm, shorten <=-10cm]  (2, 4.5)--(2, -2.3) ;
	\draw [opacity=0.75, shorten >=-10cm, shorten <=-10cm]  (-3.34, 1)--(4, -2) ;
	\draw [opacity=0.75, shorten >=-10cm, shorten <=-10cm]  (-3.34, 1)--(4.5, 2) ;
	\filldraw [line width=1pt, color=bbbbbb, fill opacity=0.2] (0.81275940334137398, 0.3328293029584386) circle (1.18724059666);
	\draw[color=bbbbbb] (0.81275940334137398, 0.3328293029584386) node {$\mathcal L_2$};
	\end{scriptsize}
	\end{tikzpicture}
	\caption{The~$\mathcal L_2$ yolk intersects all median lines of voters.}
	\label{fig:l2_yolk}
\end{figure}

The yolk is an important concept in spatial voting games due to its simplicity and its relationship to other concepts. The yolk radius provides approximate bounds on the uncovered set~\cite{feld1987uncovered,miller1980new,miller1989geometry}, limits on agenda control~\cite{feld1989limits}, Shapley-Owen power scores~\cite{feld1990theorem}, the Finagle point~\cite{wuffle1989finagle} and the~$\varepsilon$-core~\cite{tovey2011finagle}. As such, studies on the size of the yolk~\cite{feld1988centripetal,koehler1990size,DBLP:journals/mss/Tovey10a} translate to these other concepts as well.

From the perspective of computational social choice, this raises the following problem: Are there efficient algorithms for computing the yolk? Fast algorithms would, for instance, facilitate empirical studies on large data sets. Tovey~\cite{tovey1992polynomial} provides the first polynomial time algorithm, which in two dimensions, computes the yolk in~$O(n^{4})$ time. De Berg et al.~\cite{DBLP:conf/compgeom/BergGM16} provides an improved~$O(n^{4/3} \log^{1+\varepsilon} n)$ time algorithm
for the same.

The shortcoming of existing algorithms is that they require the computation of all limiting median lines, which are median lines that pass through at least two voters \cite{stone1992limiting}. However, there are~$\Omega(n e^{\sqrt{\log n}})$~\cite{DBLP:conf/compgeom/Toth00} limiting median lines in the worst case. Moreover, the best known upper bound of~$O(n^{4/3})$ seems difficult to improve on~\cite{DBLP:conf/focs/Dey97}. It is an open problem as to whether there is a faster algorithm that computes the yolk without precomputing all limiting median lines .

\subsection{Problem Statement}
Given a set~$V$ of~$n$ points in the plane, a median line of~$V$ is any line that divides the plane into two closed halfplanes, each with at most~$n/2$ points. The yolk is a minimum radius disk in the~$\mathcal L_p$ metric that intersects all median lines of~$V$.

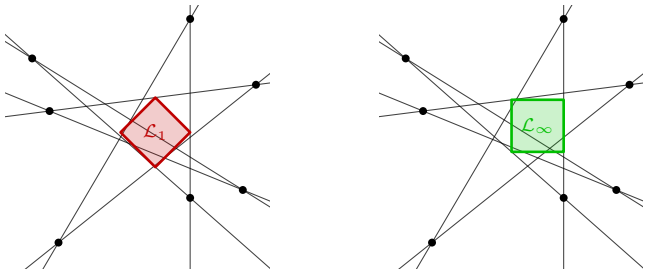
\begin{figure}[ht]
	\centering
	\begin{minipage}{0.4\textwidth}
		\begin{tikzpicture}[line cap=round,line join=round,>=triangle 45,x=0.35cm,y=0.35cm]
		\clip(-5,-5) rectangle (5,5);
		\begin{scriptsize}
		\fill [color=black] (-4,3) circle (1.5pt);
		\fill [color=black] (-3,-4) circle (1.5pt);
		\fill [color=black] (2,4.5) circle (1.5pt);
		\fill [color=black] (2,-2.3) circle (1.5pt);
		\fill [color=black] (-3.34,1) circle (1.5pt);
		\fill [color=black] (4,-2) circle (1.5pt);
		\fill [color=black] (4.5,2) circle (1.5pt);
		\draw [opacity=0.75, shorten >=-10cm, shorten <=-10cm]  (-4, 3)--(2, -2.3) ;
		\draw [opacity=0.75, shorten >=-10cm, shorten <=-10cm]  (-4, 3)--(4, -2) ;
		\draw [opacity=0.75, shorten >=-10cm, shorten <=-10cm]  (-3, -4)--(2, 4.5) ;
		\draw [opacity=0.75, shorten >=-10cm, shorten <=-10cm]  (-3, -4)--(4.5, 2) ;
		\draw [opacity=0.75, shorten >=-10cm, shorten <=-10cm]  (2, 4.5)--(2, -2.3) ;
		\draw [opacity=0.75, shorten >=-10cm, shorten <=-10cm]  (-3.34, 1)--(4, -2) ;
		\draw [opacity=0.75, shorten >=-10cm, shorten <=-10cm]  (-3.34, 1)--(4.5, 2) ;
		\filldraw [line width=1pt, color=rrrrrr, fill opacity=0.2] (1.9999999999999996, 0.19022537279710786)--(0.67775643922277429, 1.5124689335743331)--(-0.64448712155445087, 0.19022537279710786)--(0.67775643922277429, -1.1320181879801172)--cycle;
		\draw[color=rrrrrr] (0.67775643922277429, 0.19022537279710786) node {$\mathcal L_1$};
		\end{scriptsize}
		\end{tikzpicture}
	\end{minipage}
	\begin{minipage}{0.4\textwidth}
		\begin{tikzpicture}[line cap=round,line join=round,>=triangle 45,x=0.35cm,y=0.35cm]
\clip(-5,-5) rectangle (5,5);
		\begin{scriptsize}
		\fill [color=black] (-4,3) circle (1.5pt);
		\fill [color=black] (-3,-4) circle (1.5pt);
		\fill [color=black] (2,4.5) circle (1.5pt);
		\fill [color=black] (2,-2.3) circle (1.5pt);
		\fill [color=black] (-3.34,1) circle (1.5pt);
		\fill [color=black] (4,-2) circle (1.5pt);
		\fill [color=black] (4.5,2) circle (1.5pt);
		\draw [opacity=0.75, shorten >=-10cm, shorten <=-10cm]  (-4, 3)--(2, -2.3) ;
		\draw [opacity=0.75, shorten >=-10cm, shorten <=-10cm]  (-4, 3)--(4, -2) ;
		\draw [opacity=0.75, shorten >=-10cm, shorten <=-10cm]  (-3, -4)--(2, 4.5) ;
		\draw [opacity=0.75, shorten >=-10cm, shorten <=-10cm]  (-3, -4)--(4.5, 2) ;
		\draw [opacity=0.75, shorten >=-10cm, shorten <=-10cm]  (2, 4.5)--(2, -2.3) ;
		\draw [opacity=0.75, shorten >=-10cm, shorten <=-10cm]  (-3.34, 1)--(4, -2) ;
		\draw [opacity=0.75, shorten >=-10cm, shorten <=-10cm]  (-3.34, 1)--(4.5, 2) ;
		\filldraw [line width=1pt, color=gggggg, fill opacity=0.2] (1.9999999999999996, 1.4285986129905268)--(1.9999999999999996, -0.55118826116373498)--(0.020213125845737756, -0.55118826116373498)--(0.020213125845737756, 1.4285986129905268)--cycle;
		\draw[color=gggggg] (1.01010656292, 0.43870517591) node {$\mathcal L_\infty$};
		\end{scriptsize}
		\end{tikzpicture}
	\end{minipage}
	\caption{Example of yolks in the~$\mathcal L_1$ and~$\mathcal L_\infty$ metrics.}
	\label{fig:l1_yolk}
\end{figure}

We compute yolks in the~$\mathcal L_1$ (Taxicab), the~$\mathcal L_2$ (Euclidean), and the~$\mathcal L_\infty$ (Uniform) metrics. As shown in Figure~\ref{fig:l1_yolk}, the yolk in~$\mathcal L_1$ is the smallest~$45^\circ$-rotated square and in~$\mathcal L_\infty$ the smallest axis-parallel square, that intersects all median lines of~$V$. 

\subsection{Our Contributions and Results}
Our contributions are, first, an algorithm that computes the yolk in the~$\mathcal L_1$ and~$\mathcal L_\infty$ metrics in~$O(n \log^7 n)$ time, and second, an algorithm that computes a~$(1+\varepsilon)$-approximation of the yolk in the~$\mathcal L_2$ metric in~$O(n \log^7 n \cdot \log^4 \frac 1 \varepsilon)$ time.

We achieve the improved upper bounds by carefully applying Megiddo's~\cite{megiddo1983applying} parametric search technique, which is a powerful yet complex technique and that could be useful for other spatial voting problems.

The parametric search technique is a framework for converting decision algorithms into optimisation algorithms. For the yolk problem, a decision algorithm would decide whether a given disk intersects all median lines. If this decision algorithm satisifies the three properties as specified by the framework, then Megiddo's result states that there is an efficient algorithm to compute the yolk. 

For the purposes of designing a decision algorithm with the desired properties, we instead consider the more general problem of finding the smallest regular,~$k$-sided polygon that intersects all median lines of~$V$. The regular~$k$-sided polygon~$P_k(r,x,y)$ is shown in Figure~\ref{fig:P_k} and is defined as:

\begin{definition}
Given an integer~$k \geq 3$, construct the regular~$k$-sided polygon~$P_k(r,x,y)$ by:
\begin{itemize}
	\item Constructing a circle with radius~$r$ and centered at~$(x,y)$.
	\item Placing a vertex at the top-most point on the circle, i.e. at~$(x,y+r)$.
	\item Placing the remaining~$k-1$ vertices around the circle so that the~$k$ vertices are evenly spaced.
\end{itemize}
\end{definition}

\begin{figure}[ht]
	\centering
	\begin{tikzpicture}[line cap=round,line join=round,>=triangle 45,x=1.0cm,y=1.0cm,scale=0.6]
	\clip(-4,-4) rectangle (4,4);
	\fill[line width=2.pt,color=rrrrrr,fill=rrrrrr,fill opacity=0.10000000149011612] (0.,3.) -- (-1.5,2.598076211353316) -- (-2.598076211353316,1.5) -- (-3.,0.) -- (-2.598076211353316,-1.5) -- (-1.5,-2.5980762113533156) -- (0.,-3.) -- (1.5,-2.5980762113533165) -- (2.5980762113533156,-1.5) -- (3.,0.) -- (2.598076211353317,1.5) -- (1.5,2.598076211353315) -- cycle;
	\draw [line width=2.pt,color=rrrrrr] (0.,3.)-- (-1.5,2.598076211353316);
	\draw [line width=2.pt,color=rrrrrr] (-1.5,2.598076211353316)-- (-2.598076211353316,1.5);
	\draw [line width=2.pt,color=rrrrrr] (-2.598076211353316,1.5)-- (-3.,0.);
	\draw [line width=2.pt,color=rrrrrr] (-3.,0.)-- (-2.598076211353316,-1.5);
	\draw [line width=2.pt,color=rrrrrr] (-2.598076211353316,-1.5)-- (-1.5,-2.5980762113533156);
	\draw [line width=2.pt,color=rrrrrr] (-1.5,-2.5980762113533156)-- (0.,-3.);
	\draw [line width=2.pt,color=rrrrrr] (0.,-3.)-- (1.5,-2.5980762113533165);
	\draw [line width=2.pt,color=rrrrrr] (1.5,-2.5980762113533165)-- (2.5980762113533156,-1.5);
	\draw [line width=2.pt,color=rrrrrr] (2.5980762113533156,-1.5)-- (3.,0.);
	\draw [line width=2.pt,color=rrrrrr] (3.,0.)-- (2.598076211353317,1.5);
	\draw [line width=2.pt,color=rrrrrr] (2.598076211353317,1.5)-- (1.5,2.598076211353315);
	\draw [line width=2.pt,color=rrrrrr] (1.5,2.598076211353315)-- (0.,3.);
	\begin{scriptsize}
	\draw [fill=black] (0.,0.) circle (2.0pt);
	\draw[color=black] (0,0.4) node {$(x,y)$};
	\draw [fill=black] (0.,3.) circle (2.0pt);
	\draw[color=black] (0.1,3.4) node {$(x,y+r)$};
	\draw[color=rrrrrr] (0,-1) node {$P_k(r,x,y)$};
	\end{scriptsize}
	\end{tikzpicture}
	\caption{The regular,~$k$-sided polygon~$P_k(r,x,y)$.}
	\label{fig:P_k}
\end{figure}
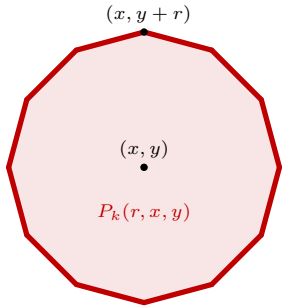

In Section~\ref{sec:decision_algorithm}, we present the decision algorithm, which given a regular,~$k$-sided polygon~$P_k(r,x,y)$, decides whether the polygon intersects all median lines of~$V$. Next, in Section~\ref{sec:parametric_search}, we apply Megiddo's technique to the decision algorithm and prove the convexity and parallelisability properties. This leaves one final property, the existence of critical hyperplanes, left to check. We prove this final property in Sections~4-6, thus completing the parametric search. Finally, in Section~7, we show that our general problem for the regular,~$k$-sided polygon~$P_k(r,x,y)$ implies the claimed running times by setting~$k=4$ for~$\mathcal L_1$ and~$\mathcal L_\infty$, and~$k = \frac 1 \varepsilon$ for~$\mathcal L_2$.

\section{Decision Algorithm}
\label{sec:decision_algorithm}

The aim of this section is to design an algorithm that solves the following decision problem: 

\begin{definition}
Given an integer~$k \geq 3$ and a set~$V$ of~$n$ points in the plane, the decision problem~$D_{k,V}(r,x,y)$ is to decide whether the polygon~$P_k(r,x,y)$ intersects all median lines of~$V$.
\end{definition}

We show that there is a comparison-based decision algorithm that solves~$D_{k,V}(r,x,y)$ in~$O(n \log n \cdot \log k)$ time, provided the following two comparison-based subroutines. 

\begin{subroutine}
A comparison-based subroutine that, given a point~$p$ and a regular~$k$-sided polygon~$P_k(r,x,y)$, decides if~$p$ is outside~$P_k(r,x,y)$ in~$O(\log k)$ time.
\end{subroutine}

\begin{subroutine}
A comparison-based subroutine that, given points~$p, q$ outside a regular~$k$-sided polygon~$P_k(r,x,y)$, computes the relative clockwise order of the four tangent lines drawn from~$\{p,q\}$ to~$P_k(r,x,y)$ in~$O(\log k)$ time.
\end{subroutine}

Although the running time of these two subroutines are not too difficult to prove, we shall see in Section~3 that these subroutines must satisfy a stronger requirement for the parametric search technique to apply. We will formally define the stronger requirement in the next section. To avoid repetition, we simultaneously address the subroutine and the stronger requirement in Sections~5 and~6. But for now, we assume the subroutines exist and present the decision algorithm:

\begin{theorem}
\label{theorem:decide_pk}
Given an integer~$k \geq 3$ and a set~$V$ of~$n$ points in the plane, there is a comparison-based algorithm that solve the decision problem~$D_{k,V}(r,x,y)$ in~$O(n \log n \cdot \log k)$ time, provided that Subroutine~1 and Subroutine~2 exist.
\end{theorem}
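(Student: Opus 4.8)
The plan is to reduce the decision problem to a one-dimensional maximum-overlap question on the circle of directions, and then to solve that question using the two subroutines. First I would recast the problem in contrapositive form: $P:=P_k(r,x,y)$ fails to intersect all median lines iff some median line \emph{avoids} $P$, i.e.\ $P$ lies strictly inside one of the two open halfplanes bounded by that line. Since $P$ is convex, every line avoiding $P$ is parallel to, and lies strictly beyond, the unique supporting line of $P$ with the same outward normal. Hence I can parametrise every avoiding line by a unit normal $u$ together with an offset strictly larger than the support value $h_P(u)=\max_{z\in P}\langle z,u\rangle$.

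The key lemma I would prove is a counting characterisation. For a direction $u$, let $N(u)$ be the number of points of $V$ lying strictly beyond the supporting line of $P$ with outward normal $u$ (i.e.\ strictly on the side not containing $P$). Sweeping the avoiding line outward from the supporting line only decreases the number of points on the outer open side and increases the number on the $P$-side; comparing both counts against the majority threshold then shows that some median line avoids $P$ in direction $u$ precisely when $N(u)$ reaches that threshold. Taking the maximum over all directions, $P$ intersects all median lines iff $\max_u N(u)$ stays below the majority threshold. I expect the boundary cases here—parity of $n$, points lying exactly on a supporting line, and ties among the one-dimensional projections—to be the main place where care is needed in fixing the precise threshold.

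Next I would translate $N(u)$ into a circular-overlap quantity. A point $p$ lies strictly beyond the supporting line of normal $u$ iff $\langle p,u\rangle>h_P(u)$, and because this holds for some $u$ exactly when $p$ is outside $P$, the set of such normals is an open arc $I_p$ on the circle of directions whose two endpoints are the normals of the two tangent lines drawn from $p$ to $P$; points inside $P$ contribute no arc. Therefore $N(u)=\sum_{p}\mathbf 1[u\in I_p]$, and $\max_u N(u)$ equals the maximum depth of the arc family $\{I_p\}$.

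Finally I would assemble the algorithm and its analysis. Using Subroutine~1 I would first discard, in $O(n\log k)$ total time, every point inside $P$, keeping only those generating an arc. The two endpoints of each arc are tangent directions, so comparing an endpoint of $p$ with an endpoint of $q$ is exactly deciding the relative clockwise order of the four tangent lines from $\{p,q\}$, which is Subroutine~2. I would therefore sort the $2n$ arc endpoints around the circle with Subroutine~2 as the comparator ($O(n\log n)$ comparisons, each $O(\log k)$ time), count how many arcs cover a fixed reference direction, and sweep the sorted endpoints while incrementing or decrementing a counter to obtain $\max_u N(u)$ in $O(n)$ further time; comparing this maximum to the majority threshold decides $D_{k,V}(r,x,y)$. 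The total running time is $O(n\log n\cdot\log k)$, and since every geometric decision is deferred to the two comparison-based subroutines, the algorithm is itself comparison-based. Beyond the threshold bookkeeping above, the remaining obstacle is to verify that neither the sort nor the sweep ever needs numeric angle values—only the relative order supplied by Subroutine~2—so that the procedure stays purely comparison-based and is ready for the parametric search of Section~3.
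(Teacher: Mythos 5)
Your proposal is correct and takes essentially the same route as the paper: your circle-of-directions formulation with arc depth $N(u)$ is precisely the paper's rotating tangent-line sweep, where $N(u)$ is the number of points in the open halfplane $t^+$ beyond the supporting line with outward normal $u$, and both proofs use Subroutine~1 to discard points inside $P_k(r,x,y)$, Subroutine~2 as the comparator for sorting the $2n$ tangent-direction events, and a linear scan of the counter against the $n/2$ threshold. The only difference is presentational (dual space of directions versus primal rotating tangent), including the same continuity argument for the converse direction of the counting characterisation.
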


\begin{proof}
The proof comes in three parts. First, we transform the decision problem~$D_{k,V}(r,x,y)$ into an equivalent form that does not have median lines in its statement. Then, we present a sweep line algorithm for the transformed version. Finally, we perform an analysis of the running time.

Consider for now a single median line~$m_g$ that has gradient~$g$. Construct two parallel lines~$t_U(g)$ and~$t_D(g)$ that also have gradient~$g$, but are tangent to~$P_k(r,x,y)$ from above and below respectively. If the median line~$m_g$ intersects~$P_k(r,x,y)$, as shown in Figure~\ref{fig:t_U}, then~$m_g$ must be in between~$t_U(g)$ and~$t_D(g)$.

\begin{figure}[ht]
	\centering
	\begin{tikzpicture}[line cap=round,line join=round,>=triangle 45,x=1.0cm,y=1.0cm,scale=0.6]
	\clip(-5,-4) rectangle (5,4);
	\fill[line width=2.pt,color=rrrrrr,fill=rrrrrr,fill opacity=0.10000000149011612] (0.,3.) -- (-1.5,2.598076211353316) -- (-2.598076211353316,1.5) -- (-3.,0.) -- (-2.598076211353316,-1.5) -- (-1.5,-2.5980762113533156) -- (0.,-3.) -- (1.5,-2.5980762113533165) -- (2.5980762113533156,-1.5) -- (3.,0.) -- (2.598076211353317,1.5) -- (1.5,2.598076211353315) -- cycle;
	\draw [line width=2.pt,color=rrrrrr] (0.,3.)-- (-1.5,2.598076211353316);
	\draw [line width=2.pt,color=rrrrrr] (-1.5,2.598076211353316)-- (-2.598076211353316,1.5);
	\draw [line width=2.pt,color=rrrrrr] (-2.598076211353316,1.5)-- (-3.,0.);
	\draw [line width=2.pt,color=rrrrrr] (-3.,0.)-- (-2.598076211353316,-1.5);
	\draw [line width=2.pt,color=rrrrrr] (-2.598076211353316,-1.5)-- (-1.5,-2.5980762113533156);
	\draw [line width=2.pt,color=rrrrrr] (-1.5,-2.5980762113533156)-- (0.,-3.);
	\draw [line width=2.pt,color=rrrrrr] (0.,-3.)-- (1.5,-2.5980762113533165);
	\draw [line width=2.pt,color=rrrrrr] (1.5,-2.5980762113533165)-- (2.5980762113533156,-1.5);
	\draw [line width=2.pt,color=rrrrrr] (2.5980762113533156,-1.5)-- (3.,0.);
	\draw [line width=2.pt,color=rrrrrr] (3.,0.)-- (2.598076211353317,1.5);
	\draw [line width=2.pt,color=rrrrrr] (2.598076211353317,1.5)-- (1.5,2.598076211353315);
	\draw [line width=2.pt,color=rrrrrr] (1.5,2.598076211353315)-- (0.,3.);
	\draw [line width=1.pt,domain=-7.73603361203088:32.919135820740756] plot(\x,{(-7.402020131495246--2.336227153423486*\x)/3.636807218215946});
	\draw [line width=1.pt,domain=-7.73603361203088:32.919135820740756,color=black] plot(\x,{(--15.618408410345495--2.8169558059296755*\x)/4.385158007168881});
	\draw [line width=1.pt,domain=-7.73603361203088:32.919135820740756,color=black] plot(\x,{(-15.618408410345486--2.8169558059296738*\x)/4.3851580071688785});
	\begin{scriptsize}
	\draw[color=rrrrrr] (0,0) node {$P_k(r,x,y)$};
	\draw[color=black] (4,1) node {$m_g$};
	\draw[color=black] (1.4,3.8) node {$t_U(g)$};
	\draw[color=black] (4,-1.7) node {$t_D(g)$};
	\end{scriptsize}
	\end{tikzpicture}
	\caption{The relative positions of~$m_g$,~$t_U(g)$ and~$t_D(g)$ if~$m_g$ intersects the $k$-sided regular polygon~$P_k(r,x,y)$.}
	\label{fig:t_U}
\end{figure}
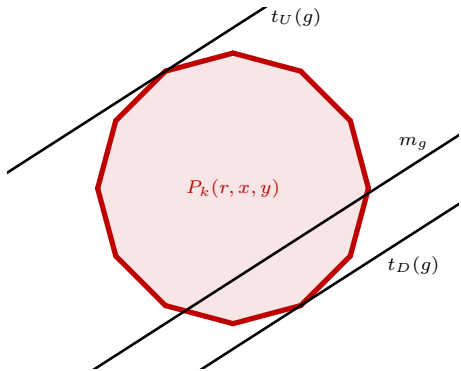

We will decide whether all median lines of gradient~$g$ are between~$t_U(g)$ and~$t_D(g)$, as this would immediately decide whether all median lines of gradient~$g$ intersects~$P_k(r,x,y)$. We will solve this restricted decision problem by counting the number of points in~$V$ above~$t_U(g)$ and the number of points in~$V$ below~$t_D(g)$. 

Let~$t_U^+(g)$ be the number of points in~$V$ that are above~$t_U(g)$, and similarly~$t_D^-(g)$ for the points in~$V$ below~$t_D(g)$. Suppose that~$t_U^+(g) < n/2$ and~$t_D^-(g) < n/2$. Then there cannot be a median line of gradient~$g$ above~$t_U(g)$ or below~$t_D(g)$, since one side of the median line, in particular the side that contains the polygon, will have more than $n/2$ points. Hence, if~$t_U^+(g) < n/2$ and~$t_D^-(g) < n/2$, then all median lines of gradient~$g$ must be between~$t_U(g)$ and~$t_D(g)$.

Conversely, suppose that all median lines of gradient~$g$ are between~$t_U(g)$ and~$t_D(g)$. Then if~$t_U^+(g) \geq n/2$, we can move~$t_U(g)$ continuously upwards until it becomes a median line, which is a contradiction. So in this case, we know~$t_U^+(g) < n/2$ and~$t_D^-(g) < n/2$. 

In summary, we have transformed the decision problem into one that does not have median lines in its statement: All median lines intersect~$P_k(r,x,y)$ if for all gradients~$g$, the pair of inequalities~$t_U^+(g) < n/2$ and~$t_D^-(g) < n/2$ hold.

We present a sweep line algorithm that computes whether the pair of inequalities hold for all gradients~$g$. Let~$t$ be an arbitrary line tangent to the polygon~$P_k(r,x,y)$, and define~$t^+$ to be the open halfplane that has~$t$ as its boundary and does not include the polygon~$P_k(r,x,y)$. Then all median lines intersect~$P_k(r,x,y)$ if and only if for all positions of~$t$, the open halfplane~$t^+$ contains less than~$n/2$ points.

\begin{figure}[ht]
	\centering
	\begin{tikzpicture}[line cap=round,line join=round,>=triangle 45,x=1.0cm,y=1.0cm,scale=0.6]
	\clip(-5,-4) rectangle (5,5);
	\fill[line width=2.pt,color=rrrrrr,fill=rrrrrr,fill opacity=0.10000000149011612] (0.,3.) -- (-1.5,2.598076211353316) -- (-2.598076211353316,1.5) -- (-3.,0.) -- (-2.598076211353316,-1.5) -- (-1.5,-2.5980762113533156) -- (0.,-3.) -- (1.5,-2.5980762113533165) -- (2.5980762113533156,-1.5) -- (3.,0.) -- (2.598076211353317,1.5) -- (1.5,2.598076211353315) -- cycle;
	\draw [line width=2.pt,color=rrrrrr] (0.,3.)-- (-1.5,2.598076211353316);
	\draw [line width=2.pt,color=rrrrrr] (-1.5,2.598076211353316)-- (-2.598076211353316,1.5);
	\draw [line width=2.pt,color=rrrrrr] (-2.598076211353316,1.5)-- (-3.,0.);
	\draw [line width=2.pt,color=rrrrrr] (-3.,0.)-- (-2.598076211353316,-1.5);
	\draw [line width=2.pt,color=rrrrrr] (-2.598076211353316,-1.5)-- (-1.5,-2.5980762113533156);
	\draw [line width=2.pt,color=rrrrrr] (-1.5,-2.5980762113533156)-- (0.,-3.);
	\draw [line width=2.pt,color=rrrrrr] (0.,-3.)-- (1.5,-2.5980762113533165);
	\draw [line width=2.pt,color=rrrrrr] (1.5,-2.5980762113533165)-- (2.5980762113533156,-1.5);
	\draw [line width=2.pt,color=rrrrrr] (2.5980762113533156,-1.5)-- (3.,0.);
	\draw [line width=2.pt,color=rrrrrr] (3.,0.)-- (2.598076211353317,1.5);
	\draw [line width=2.pt,color=rrrrrr] (2.598076211353317,1.5)-- (1.5,2.598076211353315);
	\draw [line width=2.pt,color=rrrrrr] (1.5,2.598076211353315)-- (0.,3.);
	\draw [line width=1.pt,domain=-5:5,color=black] plot(\x,{3});
	\fill[line width=2.pt,color=black,fill=black,fill opacity=0.1] (-5.,5.) -- (5.,5.) -- (5.,3) -- (-5.,3) -- cycle;
	\begin{scriptsize}
	\draw[color=rrrrrr] (0,0) node {$P_k(r,x,y)$};
	\draw[color=black] (4.5,3.3) node {$t$};
	\draw[color=black] (0.3,4.3) node {$t^+$};
	\end{scriptsize}
	\end{tikzpicture}
	\caption{The rotating sweepline~$t$ and the open halfplane~$t^+$.}
	\label{fig:t+}
\end{figure}
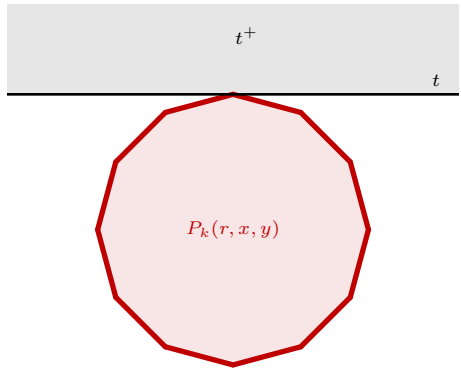

The tangent line~$t$ is a clockwise rotating sweep line and the invariant maintained by the sweep line algorithm is the number of points of~$V$ inside the region~$t^+$. Take any tangent line~$t_0$ to be the starting line, and calculate the number of points in~$t_0^+$. From here, define an event to be when the line~$t$ passes through a point. There are two events for each point outside~$P_k(r,x,y)$; there is one event for when the point enters the region~$t^+$, and one for when it exits. There are no events for points of~$V$ that lie inside~$P_k(r,x,y)$. The unsorted set of event points can be computed by applying Subroutine~1 to each of point in~$V$. 

We sort the set of event points in a clockwise fashion. If we consider only two voters, their associated events can be sorted using Subroutine~2. We can extend this to sort the associated events of all voters with any standard comparison-based sorting algorithm, for example Merge sort.

Once the sorted set of events is computed, we process the events in order. At each new event we maintain our invariant, the number of points inside the region~$t^+$. This value increases by one at ``entry'' events and decreases by one at ``exit'' events. Finally, we return whether our invariant remained less than $n/2$ at all events.

The running time analysis for the algorithm is as follows. Computing the points outside~$P_k(r,x,y)$ takes~$O(\log k)$ time per point by Subroutine~1, so in total this takes~$O(n \log k)$ time. Computing the sorted order of the event points takes~$O(\log k)$ time per comparison by Subroutine~2, which adds up to~$O(n \log n \cdot \log k)$ time. Processing the sorted event points takes~$O(n)$ time. Adding these gives the stated bound.
\end{proof}

\section{Parametric Search}
\label{sec:parametric_search}

Parametric search is a powerful yet complex technique for solving optimisation problems. The two steps involved in this technique are, first, to design a decision algorithm, and second, to convert the decision algorithm into an optimisation algorithm.

For example, our parameter space is $(r,x,y) \in \mathbb R^3$, our decision algorithm is stated in Theorem~1, and our optimisation objective is to minimise~$r \in \mathbb R^+$.

\subsection{Preliminaries}

Megiddo's~(1983) states the requirements for converting the decision algorithm into an optimisation algorithm. First, let us introduce some notation. Let~$\mathbb R^d$ be a parameter space, let~$\lambda \in \mathbb R^d$ be a parameter and let~$D(\lambda)$ be a decision problem that either evaluates to true or false. Then the first requirement is for the decision problem $D(\lambda)$.

\begin{property}
The set of parameters $\{\lambda \in \mathbb R^d: D(\lambda)\}$ that satisfies the decision problem is convex.
\end{property}

Convexity guarantees that the optimisation algorithm finds the global optimum. 

The second property of the technique relates to the decision algorithm. Let~$\mathcal A(\lambda)$ be a comparison-based decision algorithm that computes $D(\lambda)$. Let $C(\lambda)$ be any comparison in the comparison-based decision algorithm~$\mathcal A(\lambda)$. The comparison $C(\lambda)$ is said to have an associated critical hyperplane in $\mathbb R^d$ if the result of the comparison is linearly separable with respect to $\lambda \in \mathbb R^d$. Formally, suppose that the comparison $C(\lambda)$ evaluates to either $>$, $=$ or $<$. Then we say that the $(d-1)$-dimensional hyperplane $H \subset \mathbb R^d$ is the associated critical hyperplane of $C(\lambda)$ if $C$ evaluates to $>$, $=$ or $<$ if and only if $\lambda$ is above, on, or below $H$ respectively. The comparisons of the decision algorithm must satisfy the following property.

\begin{property}
Every comparison $C(\lambda)$ in the comparison-based decision algorithm $\mathcal A(\lambda)$ either (i) does not depend on $\lambda$, or (ii) has an associated critical hyperplane in $\mathbb R^d$.
\end{property}

This requirement allows us to compute a large set of critical hyperplanes that determines the result of $\mathcal A(\lambda)$. Moreover, the optimum must lie on one of these critical hyperplanes, since the result of $\mathcal A(\lambda)$ locally changes sign at the optimum. The new search space now has dimension $d-1$ instead of dimension $d$, and we can recursively apply this procedure to reduce the dimension further. For details see~\cite{agarwal1998efficient}. 

The final property speeds up the parametric search.

\begin{property}
The decision algorithm has an efficient parallel algorithm.
\end{property}

If the decision algorithm~$\mathcal A(\lambda)$ runs in~$T_s$ time and runs on~$P$ processors in~$T_p$ parallel steps, then the parametric search over $\lambda \in \mathbb R^d$ runs in~$O(T_p P + T_s (T_p \log P)^d)$ time~\cite{agarwal1998efficient}.

\subsection{Applying the technique}

To apply the parametric search technique, we show that our decision problem~$D_{k,V}(r,x,y)$ satisfies Properties~1-3.

\begin{lemma}
Given an integer~$k \geq 3$ and a set~$V$ of~$n$ points in the plane, the set of parameters $\{(r,x,y): D_{k,V}(r,x,y)\}$ that satisfies the decision problem is convex.
\end{lemma}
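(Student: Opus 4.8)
The plan is to show that the region $\{(r,x,y): D_{k,V}(r,x,y)\}$ is an intersection of halfspaces in $\mathbb{R}^3$, one per median line, and then conclude convexity since any intersection of convex sets is convex. The key conceptual step is to translate ``$P_k(r,x,y)$ intersects the median line $m$'' into a linear condition on the parameters $(r,x,y)$. To do this, I would first fix attention on a single line $m$ (not just median lines — in fact the argument works for any fixed line) and characterise exactly which triples $(r,x,y)$ cause the polygon $P_k(r,x,y)$ to meet $m$.

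First I would observe that, for a fixed gradient, $P_k(r,x,y)$ lives between its two tangent lines $t_U(g)$ and $t_D(g)$ of that gradient, as in the proof of Theorem~\ref{theorem:decide_pk}. Since the polygon is a scaled and translated copy of the fixed unit polygon $P_k(1,0,0)$ — scaling by $r$ about the center and then translating by $(x,y)$ — the signed distance from the center $(x,y)$ to any supporting line of the polygon is of the form $c \cdot r$ for a constant $c$ depending only on $k$ and the line's direction. Concretely, for the fixed line $m$, the polygon $P_k(r,x,y)$ intersects $m$ if and only if the perpendicular distance from $(x,y)$ to $m$ is at most $c_k \cdot r$, where $c_k$ is the (direction-dependent but $(r,x,y)$-independent) inradius-type constant measuring how far the polygon extends toward $m$. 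The perpendicular distance from $(x,y)$ to $m$ is $|ax+by+e|/\sqrt{a^2+b^2}$ for the line $m:\ ax+by+e=0$, so the intersection condition is $|ax+by+e| \le c_k \sqrt{a^2+b^2}\, r$.

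The condition $|ax+by+e| \le c_k\sqrt{a^2+b^2}\,r$ is equivalent to the pair of linear inequalities $ax+by+e \le c_k\sqrt{a^2+b^2}\,r$ and $-(ax+by+e) \le c_k\sqrt{a^2+b^2}\,r$, each of which defines a halfspace in $(r,x,y)$-space. Hence the set of triples for which $P_k(r,x,y)$ meets the single line $m$ is an intersection of two halfspaces, so it is convex. Intersecting over all median lines $m$ of $V$ — a (possibly infinite) family of convex sets — preserves convexity, and by definition this intersection is exactly $\{(r,x,y): D_{k,V}(r,x,y)\}$. Therefore the set is convex.

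The main obstacle is justifying the precise form of the intersection condition, i.e.\ that the supporting ``reach'' constant $c_k$ in a fixed direction scales linearly in $r$ and is independent of $(x,y)$; this follows because $P_k(r,x,y)$ is a homothet of $P_k(1,0,0)$, so its support function in any fixed direction is $r$ times the support function of the unit polygon plus the linear term from the translation $(x,y)$. A secondary point worth stating carefully is that we need $r \ge 0$ for $P_k$ to be a genuine polygon, but $\{r \ge 0\}$ is itself a halfspace, so adding it does not disturb convexity. I do not expect either point to be difficult, so the whole lemma reduces to the homothety observation and the standard fact that intersections of halfspaces are convex.
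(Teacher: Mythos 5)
Your proof is correct in substance but takes a genuinely different route from the paper's. The paper verifies convexity directly from the definition: given $\lambda_1,\lambda_2$ with $D_{k,V}(\lambda_1)$ and $D_{k,V}(\lambda_2)$ true, it notes that $P_k(\alpha\lambda_1+(1-\alpha)\lambda_2)$ is the convex (Minkowski-type) combination of $P_k(\lambda_1)$ and $P_k(\lambda_2)$, so if $p\in P_k(\lambda_1)\cap m$ and $q\in P_k(\lambda_2)\cap m$ then $\alpha p+(1-\alpha)q$ lies in both $P_k(\alpha\lambda_1+(1-\alpha)\lambda_2)$ and $m$; hence every median line met by both polygons is met by the combined one. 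You instead exhibit the feasible set as an intersection of halfspaces: for each fixed line $m$, the triples $(r,x,y)$ for which $P_k(r,x,y)$ meets $m$ are cut out by linear inequalities, because the support function of $P_k(r,x,y)=(x,y)+r\,P_k(1,0,0)$ in any fixed direction is linear in $(r,x,y)$; convexity then follows since arbitrary intersections of halfspaces are convex. Both arguments rest on the same homothety fact, but they buy different things: the paper's is shorter and coordinate-free, while yours gives strictly more structure --- it shows the feasible region is an intersection of halfspaces whose boundaries are exactly the kind of critical hyperplanes the paper constructs later (Lemma~\ref{lemma:get_crit}) to establish Property~2, so your route makes the link between convexity and the parametric-search machinery explicit.

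One imprecision to repair: your displayed condition $|ax+by+e|\le c_k\sqrt{a^2+b^2}\,r$ with a single constant $c_k$ tacitly assumes $P_k$ is centrally symmetric, which holds only for even $k$. For odd $k$ (e.g.\ a triangle) the polygon's reach from its center toward the two sides of $m$ differs, so the correct condition is the pair $-c_k^{-}\sqrt{a^2+b^2}\,r\;\le\; ax+by+e\;\le\; c_k^{+}\sqrt{a^2+b^2}\,r$, where $c_k^{\pm}$ are the support-function values of $P_k(1,0,0)$ in the directions $\mp(a,b)/\sqrt{a^2+b^2}$. Each inequality is still linear in $(r,x,y)$, so the set for a single line remains an intersection of two halfspaces (three, with your $r\ge 0$ observation), and the rest of your argument goes through unchanged.
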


\begin{proof}
Suppose we are given a convex combination~$\lambda_3 = \alpha \lambda_1 + (1-\alpha) \lambda_2$ of the two parameters~$\lambda_1, \lambda_2 \in \mathbb R^3$. Then the polygon~$P_k(\lambda_3)$ is a convex combination of the polygons~$P_k(\lambda_1)$ and~$P_k(\lambda_2)$. It is easy to check that if a line~$m$ intersects both~$P_k(\lambda_1)$ and~$P_k(\lambda_2)$, then the line~$m$ must also intersect the convex combination~$P_k(\lambda_3)$.

Now assume that both~$D_{k,V}(\lambda_1)$ and~$D_{k,V}(\lambda_2)$ are true. Then for any median line~$m$ both~$P_k(\lambda_1)$ and~$P_k(\lambda_2)$ intersect~$m$. By the observation above, the convex combination~$P_k(\lambda_3)$ must also intersects~$m$. Repeating this fact for all median lines implies that~$P_k(\lambda_3)$ intersects all median lines of~$V$. So~$D_{k,V}(\lambda_3)$ is true whenever~$D_{k,V}(\lambda_1)$ and~$D_{k,V}(\lambda_2)$ are true. Therefore, the set of parameters~$\{(r,x,y) \subseteq \mathbb R^3: D_{k,V}(r,x,y)\}$ is convex.
\end{proof}

\begin{lemma}
Every comparison in the decision algorithm in Theorem~1 either (i) does not depend on $(r,x,y)$, or (ii) has an associated critical hyperplane in $\mathbb R^3$.
\end{lemma}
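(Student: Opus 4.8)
The plan is to examine every comparison made by the decision algorithm of Theorem~1 and show that each either is independent of the parameters $(r,x,y)$ or is governed by a critical hyperplane in $\mathbb R^3$. Recall that the decision algorithm performs exactly three kinds of computational steps: (a) it classifies each point $p \in V$ as inside or outside $P_k(r,x,y)$ via Subroutine~1; (b) it sorts the resulting event points in clockwise order using the pairwise comparisons of Subroutine~2; and (c) it processes the sorted events, incrementing or decrementing a counter and comparing the running count against the fixed threshold $n/2$. I would handle these three sources of comparisons in turn.

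First I would dispatch the easy cases. The comparisons in step~(c) against $n/2$ compare an integer counter against a constant, so their outcome does not depend on $(r,x,y)$ at all; these fall under case~(i) of the property. That isolates the real work to the geometric comparisons hidden inside Subroutine~1 and Subroutine~2, namely the $O(\log k)$ elementary comparisons each subroutine performs. The crux of the argument is therefore: \emph{each elementary comparison inside Subroutine~1 and Subroutine~2, viewed as a function of $(r,x,y)$, is linearly separable, i.e.\ it switches its $<$/$=$/$>$ verdict exactly as $(r,x,y)$ crosses some plane $H \subset \mathbb R^3$.} For Subroutine~1, the question ``is $p$ outside $P_k(r,x,y)$?'' reduces to testing $p$ against the halfplanes bounded by the edges of $P_k$. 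The key observation is that each edge of $P_k(r,x,y)$ is supported by a line whose signed distance from a fixed point $p$ is an \emph{affine} function of $(r,x,y)$: the polygon is obtained by rigidly scaling a fixed unit-radius template by $r$ and translating by $(x,y)$, so an edge at fixed orientation $\theta_i$ has equation of the form $\cos\theta_i (X-x) + \sin\theta_i (Y-y) = r\cos(\pi/k)$ (the apothem), and substituting the fixed coordinates of $p$ yields a linear expression in $r,x,y$. Thus the side-of-edge test for $p$ is exactly the sign of an affine form in $(r,x,y)$, whose zero set is the desired critical hyperplane. The binary-search comparisons that Subroutine~1 uses to locate the relevant edge are of the same type (they compare $p$ against the bisecting rays of the template, again affine in the parameters), so they too are linearly separable.

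Next I would treat Subroutine~2, the clockwise-ordering comparison for the tangent lines drawn from two outside points $p,q$ to $P_k(r,x,y)$. The tangent direction from an exterior point $p$ to the polygon is determined by which template vertex $v_i(r,x,y) = (x,y) + r(\sin\alpha_i, \cos\alpha_i)$ is the extreme support vertex, and the internal comparisons decide, for a candidate vertex, whether $p$ lies clockwise or counterclockwise of the ray through that vertex. Such a comparison has the form $\mathrm{sign}\big((v_i - p)\times(v_{i+1}-v_i)\big)$ or an orientation test $\mathrm{sign}\big((v_i - p)\times(v_j - p)\big)$, and since each $v_i$ is affine in $(r,x,y)$ with $p$ fixed, each such cross product is again an affine function of $(r,x,y)$ --- the $r^2$ terms from $v_i \times v_j$ cancel because the two vertices share the same center and radius, leaving a linear form. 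Hence every comparison in Subroutine~2 has an associated critical hyperplane as well.

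I expect the main obstacle to be verifying that these comparisons really are \emph{affine} rather than merely low-degree polynomial in $(r,x,y)$, because a genuine critical \emph{hyperplane} (not a curved surface) is what Property~2 and Megiddo's dimension-reduction demand. The delicate points are (i) confirming that the quadratic-in-$r$ contributions cancel in the orientation and tangency tests, and (ii) ensuring the subroutines are implemented so that \emph{every} primitive comparison is of this affine form, with no hidden nonlinear branching. This is precisely the ``stronger requirement'' flagged after the statement of the two subroutines, and it is why the paper defers the detailed construction of Subroutines~1 and~2 to Sections~5 and~6. Accordingly, I would state here that, modulo the explicit affine-comparison implementations supplied in those later sections, every comparison of the decision algorithm satisfies case~(i) or case~(ii), thereby establishing the lemma.
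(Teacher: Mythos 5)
Your overall structure matches the paper's: the paper's proof of this lemma is exactly your three-way decomposition --- the counter-versus-$n/2$ comparisons in event processing are parameter-independent (case (i)), and the geometric comparisons inside Subroutine~1 and Subroutine~2 are deferred to the later sections, where each is shown to admit a critical hyperplane. Your treatment of Subroutine~1 is also essentially the paper's: every edge and every center-to-vertex ray of $P_k(r,x,y)$ has a direction independent of $(r,x,y)$ and passes through a point of the form $(x,y)+r\cdot v$ with $v$ fixed, so each side-of-line test is the sign of an affine form in $(r,x,y)$. This is precisely the content of the paper's Lemma~3 (its ``tool''), which is then invoked for both subroutines.

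However, your justification for Subroutine~2 contains a genuine error. The orientation tests you propose, e.g.\ $\mathrm{sign}\bigl((v_i-p)\times(v_j-p)\bigr)$ with $v_i=(x,y)+r(a_i,b_i)$, are \emph{not} affine in $(r,x,y)$, and the quadratic terms do \emph{not} cancel: direct expansion gives
\[
(v_i-p)\times(v_j-p) \;=\; r\Bigl[(x-p_1)(b_j-b_i)+(y-p_2)(a_i-a_j)+r\,(a_ib_j-a_jb_i)\Bigr],
\]
and the coefficient $a_ib_j-a_jb_i$ of the $r^2$ term is nonzero whenever the two vertex directions are not parallel; there are also bilinear $rx$ and $ry$ terms. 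So the comparison is a quadratic function of the parameters, and the sentence ``each such cross product is again an affine function of $(r,x,y)$'' is false --- this is exactly the delicate point you flagged, but your proposed resolution (cancellation) is not what happens. The conclusion is salvageable: since $r>0$, the sign of the product equals the sign of the bracketed expression, which \emph{is} affine, so a critical hyperplane still exists; but that division by $r$ is the missing step. The paper sidesteps the issue entirely: its Subroutine~2 first finds the tangency vertices $i,j$ for the direction of $pq$ using only gradient comparisons (which do not depend on $(r,x,y)$ at all and hence need no hyperplanes), and then classifies $p$ and $q$ into the four regions $L,R,U,D$ via point-versus-line tests in which every line (the two tangents parallel to $pq$ and the chord $ij$) has a fixed gradient and passes through $(x,y)+r\cdot v$ for a fixed $v$, so each test is genuinely affine by Lemma~3. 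Either fix --- dividing out the positive factor $r$, or re-expressing your tests in the paper's fixed-gradient form --- closes the gap.
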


\begin{proof}
Theorem~\ref{theorem:decide_pk} consists of three steps, computing the points outside the polygon, computing the event order, and processing the events. For the first two steps, the comparisons do depend on $(r,x,y)$ and have associated critical hyperplanes. We defer the proof of this claim to Sections~\ref{sec:subroutine_1} and~\ref{lemma:subroutine_2} respectively. For the third step, the comparisons do not depend on $(r,x,y)$ but rather the event order, so there is no requirement that comparisons have critical hyperplanes. 
\end{proof}

\begin{lemma}
The decision algorithm in Theorem~1 has an efficient parallel algorithm that runs on~$O(n)$ processors and takes~$O(\log n \cdot \log k)$ parallel steps per processor.
\end{lemma}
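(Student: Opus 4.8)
The plan is to parallelize the three stages of the decision algorithm from Theorem~\ref{theorem:decide_pk} independently, and argue that each stage admits an efficient parallel implementation on $O(n)$ processors within $O(\log n \cdot \log k)$ parallel steps. Recall the three stages: (1) determining which points of $V$ lie outside $P_k(r,x,y)$ and computing the associated entry/exit events via Subroutine~1; (2) sorting the event points in clockwise order using Subroutine~2 as the comparator; and (3) sweeping through the sorted events to maintain the running count of points in $t^+$ and checking it never reaches $n/2$.

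For stage~(1), each of the $n$ points can be handled by its own processor. Since Subroutine~1 runs in $O(\log k)$ time, assigning one processor per point and letting each independently decide membership and emit its (at most two) events takes $O(\log k)$ parallel steps on $O(n)$ processors, with no inter-processor communication required. Stage~(2) is the crux: sorting $O(n)$ elements in parallel. Here I would invoke a parallel sorting network --- the natural choice is Cole's parallel merge sort, which sorts $m$ elements on $O(m)$ processors in $O(\log m)$ parallel rounds of comparisons. Each comparison is a single invocation of Subroutine~2 costing $O(\log k)$ time, so the total parallel time for this stage is $O(\log n \cdot \log k)$ on $O(n)$ processors, which dominates the overall bound.

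Stage~(3) is a prefix-sum computation: after sorting, the invariant at each event is the starting count plus the signed sum of $\pm 1$ contributions of all preceding events. Computing all prefix sums of $O(n)$ values is a standard parallel primitive that runs in $O(\log n)$ parallel steps on $O(n)$ processors, and the final test --- whether every prefix stays below $n/2$ --- is a parallel maximum (an OR of $n$ comparisons), again $O(\log n)$ steps. These comparisons do not depend on $(r,x,y)$, consistent with Lemma~4, so they contribute no critical hyperplanes but still fit the parallel time budget. Composing the three stages sequentially yields $O(\log k) + O(\log n \cdot \log k) + O(\log n) = O(\log n \cdot \log k)$ parallel steps on $O(n)$ processors.

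I expect the main obstacle to be stage~(2): one must ensure the chosen parallel sorting algorithm uses only $O(n)$ processors while achieving $O(\log n)$ comparison depth, since a naive sorting network such as bitonic sort would incur $O(\log^2 n)$ depth and inflate the bound. Invoking Cole's merge sort resolves this, but care is needed to confirm that its comparisons are exactly pairwise applications of Subroutine~2 (so the $O(\log k)$ per-comparison cost composes cleanly) and that the sort is comparison-based, which is precisely the form Megiddo's technique requires in Property~2. The remaining two stages are routine embarrassingly-parallel and prefix-sum arguments.
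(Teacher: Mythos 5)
Your proof is correct, and its skeleton (the three-stage decomposition, one processor per point for Subroutine~1, and an $O(\log n)$-depth comparison sort with Subroutine~2 as comparator) matches the paper's. Two differences are worth noting. First, for the sort the paper invokes Preparata's parallel sorting scheme rather than Cole's merge sort; both give $O(\log n)$ rounds of comparisons, and your choice of Cole actually matches the stated $O(n)$-processor bound more cleanly, since Preparata's scheme uses $O(n \log n)$ processors (a discrepancy the paper glosses over, and which is harmless because the $T_p P$ term never dominates the parametric-search running time). Second, you parallelize stage~(3) via prefix sums, whereas the paper simply observes that this stage need not be parallelized at all: in Megiddo's framework the sole purpose of the parallel algorithm is to batch the comparisons that depend on the parameter $(r,x,y)$ so their critical hyperplanes can be resolved in few rounds, and since the sweep over sorted events involves no such comparisons (your own observation, and Lemma~2 of the paper), it generates no critical hyperplanes and can be simulated sequentially for free. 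Your prefix-sum argument is valid and yields a genuinely fully parallel decision algorithm, which is a slightly stronger statement than the lemma needs; the paper's shortcut reflects the conceptual point that only parameter-dependent comparisons matter, which is worth internalizing when applying parametric search elsewhere.
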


\begin{proof}
Given~$O(n)$ processors, we decide which points are outside the polygon in parallel by assiging a processor to each point. By Subroutine~1, this takes~$O(\log k)$ parallel steps per processor. We compute the event order in parallel using Preparata's sorting scheme~\cite{preparata1978new}. Each processor requires~$O(\log n)$ calls to Subroutine~2, so it total, each processor requires $O(\log n \cdot \log k)$ parallel steps. Finally, processing the events generates no critical hyperplanes, so this step does not require parallelisation.
\end{proof}

Now we combine Properties~1-3 with Megiddo's result to obtain an optimisation algorithm for the smallest, regular, $k$-sided polygon~$P_k(r,x,y)$ that intersects all median lines.

\begin{theorem}
\label{theorem:optimise_pk}
Given a set~$V$ of~$n$ points in the plane, there is an~$O(n \log^7 n \cdot \log^4 k)$ time algorithm to compute the minimum~$r$ such that~$D_{k,V}(r,x,y)$ is true for some regular,~$k$-sided polygon~$P_k(r,x,y)$.
\end{theorem}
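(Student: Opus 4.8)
The plan is to invoke Megiddo's parametric search machinery directly, plugging in the three properties established in Lemmas~1--3 and the running time from Theorem~1. First I would recall the master bound stated in the preliminaries: if the decision algorithm runs in~$T_s$ sequential time and admits a parallel implementation using~$P$ processors in~$T_p$ parallel steps, then the parametric search over~$\lambda \in \mathbb R^d$ runs in~$O(T_p P + T_s (T_p \log P)^d)$ time. Our task is to substitute the correct values of each parameter and simplify.

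Next I would read off the parameters for our setting. The parameter space is~$(r,x,y) \in \mathbb R^3$, so~$d = 3$. By Theorem~\ref{theorem:decide_pk}, the sequential decision algorithm runs in~$T_s = O(n \log n \cdot \log k)$ time. By Lemma~3, the parallel algorithm uses~$P = O(n)$ processors and takes~$T_p = O(\log n \cdot \log k)$ parallel steps. The convexity from Lemma~1 guarantees that the optimum is found correctly, and Lemma~2 guarantees that every comparison either is independent of~$(r,x,y)$ or defines a critical hyperplane, so Property~2 holds and the recursion on the dimension is valid.

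Then I would carry out the substitution. The first term is~$T_p P = O(\log n \cdot \log k) \cdot O(n) = O(n \log n \cdot \log k)$. The dominant second term is~$T_s (T_p \log P)^3$. Here~$\log P = O(\log n)$, so~$T_p \log P = O(\log^2 n \cdot \log k)$, and cubing gives~$O(\log^6 n \cdot \log^3 k)$. Multiplying by~$T_s = O(n \log n \cdot \log k)$ yields~$O(n \log^7 n \cdot \log^4 k)$, which dominates the first term and matches the claimed bound. The minimum~$r$ achieving~$D_{k,V}(r,x,y)$ is exactly what the optimisation objective~$\min r$ selects, so the output is the desired smallest polygon radius.

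The main obstacle here is not the arithmetic, which is routine, but verifying that all three properties genuinely hold so that Megiddo's theorem applies as a black box. Property~1 (convexity) is Lemma~1 and Property~3 (parallelisability) is Lemma~3, both already proved. The subtle point is Property~2: Lemma~2 asserts the existence of critical hyperplanes for the comparisons in Subroutines~1 and~2 but explicitly defers the proof to Sections~5 and~6. Strictly, this theorem is therefore conditional on that deferred verification; I would state the theorem as relying on Lemmas~1--3 and note that the critical-hyperplane requirement for the two subroutines is discharged later, so that the citation of Megiddo's framework is legitimate. Once that is granted, the proof is simply the parameter substitution above.
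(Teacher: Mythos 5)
Your proposal matches the paper's own proof: both invoke Megiddo's multidimensional parametric search as a black box, cite Lemmas~1--3 for Properties~1--3, and substitute $T_s = O(n \log n \cdot \log k)$, $P = O(n)$, $T_p = O(\log n \cdot \log k)$, $d = 3$ into the bound $O(T_p P + T_s (T_p \log P)^d)$ to obtain $O(n \log^7 n \cdot \log^4 k)$. You actually carry out the substitution more explicitly than the paper does, and your caveat about Property~2 being discharged in Sections~5 and~6 is exactly the structure the paper itself adopts.
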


\begin{proof}
Megiddo's multidimensional parametric search implies that there is an efficient optimisation algorithm. It only remains to show the running time of the technique.

The parallel algorithm runs on $P=O(n)$ processors in~$T_p = O(\log n \cdot \log k)$ parallel steps, whereas the decision algorithm runs in $T_s = O(n \log n \cdot \log k)$ time. The dimension~$d$ of the parameter space is three. The running time of multidimensional parametric search is~$O(T_p P + T_s (T_p \log P)^d)$ ~\cite{agarwal1998efficient}. Substituting our values into the above formula yields the required bound.
\end{proof}

\section{Computing Critical Hyperplanes}

The only requirement left to check is Property~2 for the comparisons in the comparison-based subroutines, that is, Subroutine~1 and Subroutine~2. Before launching into the analysis of the two subroutines, we first prove a tool. We will use the tool repeatedly in the next two sections to simplify checking Property~2.

\begin{lemma}
\label{lemma:get_crit}
Let gradient~$g \in \mathbb R$, point~$p \in \mathbb R^2$ and vector~$v \in \mathbb R^2$ be given, and let~$(r,x,y) \in \mathbb R^3$ be a variable parameter. Let~$L_{g,v}(r,x,y)$ be the line of gradient~$g$ through the point~$(x,y) + r \cdot v$. Then~$p$ is above, on, or below~$L_{g,v}(r,x,y)$ if and only if the point~$(r,x,y)$ is above, on, or below its associated critical hyperplane~$H_{p,g,v}$.
\end{lemma}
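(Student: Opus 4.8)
The plan is to translate the geometric predicate ``$p$ is above/on/below $L_{g,v}(r,x,y)$'' directly into the sign of an explicit function of the parameter $(r,x,y)$, and then to read off the hyperplane from that function. Write $v = (v_x, v_y)$ and $p = (p_x, p_y)$. The line $L_{g,v}(r,x,y)$ has gradient $g$ and passes through the point $(x + r v_x,\, y + r v_y)$, so its equation is
\[
Y - (y + r v_y) = g\bigl(X - (x + r v_x)\bigr).
\]
First I would define the function obtained by substituting $p$ and moving everything to one side,
\[
f(r,x,y) := p_y - (y + r v_y) - g\bigl(p_x - (x + r v_x)\bigr),
\]
so that, by the usual convention for a line written in the form $Y = \dots$, the point $p$ lies above, on, or below $L_{g,v}(r,x,y)$ exactly when $f(r,x,y)$ is positive, zero, or negative. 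The hypothesis $g \in \mathbb R$ is what guarantees the line is non-vertical, so this ``$Y=\dots$'' form, and hence the above/below predicate, is well defined throughout.

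Next I would expand $f$ and collect the terms in $r$, $x$, $y$ to exhibit it as an affine function of the parameter:
\[
f(r,x,y) = (g v_x - v_y)\, r + g\, x - y + (p_y - g p_x).
\]
The crucial observation is that the linear part of $f$, the coefficient vector $(g v_x - v_y,\, g,\, -1)$, is never the zero vector, since its last entry equals $-1$ regardless of $g$ and $v$. Hence the zero set $\{(r,x,y) : f(r,x,y) = 0\}$ is a genuine $2$-dimensional hyperplane in $\mathbb R^3$, which I would take as the definition of $H_{p,g,v}$.

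Finally, I would orient $H_{p,g,v}$ so that its positive side is the open halfspace $\{f > 0\}$. Since $f$ is affine with nonzero gradient, the two open halfspaces it bounds are precisely $\{f > 0\}$ and $\{f < 0\}$, and every point of $\mathbb R^3$ is above, on, or below $H_{p,g,v}$ according as $f$ is positive, zero, or negative there. Combining this with the equivalence from the first step yields exactly the claimed trichotomy, matching the critical-hyperplane convention of Section~3.

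I do not expect a genuine obstacle here, as the argument is a direct computation. The only two points requiring care are (i) checking the non-degeneracy of the normal vector, so that $H_{p,g,v}$ really is a hyperplane rather than all of $\mathbb R^3$, and (ii) fixing the above/below orientation so that the signs line up consistently with the definition of a critical hyperplane. Both are handled by the single remark that the coefficient of $y$ is always $-1$. Since the subsequent sections invoke this lemma repeatedly, I would state it in this parameter-free, reusable form so that any later comparison of the shape ``is $p$ above the line of gradient $g$ through $(x,y)+r v$?'' immediately produces its critical hyperplane.
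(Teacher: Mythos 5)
Your proposal is correct and follows essentially the same route as the paper's own proof: substitute the point $(x,y)+r\cdot v$ into the above/below predicate for $p$, expand, and collect terms to exhibit the condition as the sign of an affine function of $(r,x,y)$, whose zero set is the critical hyperplane (your coefficients $(gv_x - v_y,\, g,\, -1)$ match the paper's $a=g$, $b=-1$, $c=gv_1-v_2$, $d=p_2-gp_1$ exactly). The only additions are your explicit non-degeneracy remark (the coefficient of $y$ is always $-1$) and the orientation convention, which the paper leaves implicit.
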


\begin{figure}[ht]
	\centering
	\begin{minipage}{0.4\textwidth}
		\begin{tikzpicture}[line cap=round,line join=round,>=triangle 45,x=3.0cm,y=3.0cm]
		\clip(0,0) rectangle (1,1);
		\draw [line width=1.pt,domain=0:1,color=bbbbbb] plot(\x,{(--0+0.8*\x)});
		\begin{scriptsize}
		\draw [fill=black] (0.3,0.7) circle (1pt);
		\draw[color=black] (0.25,0.75) node {$p$};
		\draw[color=bbbbbb] (0.7,0.3) node {$L_{g,v}(r,x,y)$};
		\end{scriptsize}
		\end{tikzpicture}
	\end{minipage}
	\begin{minipage}{0.05\textwidth}
		$\iff$
	\end{minipage}
	\begin{minipage}{0.4\textwidth}
		\begin{tikzpicture}[line cap=round,line join=round,>=triangle 45,x=3.0cm,y=3.0cm]
		\clip(0,0) rectangle (1,1);
		\fill[line width=1.pt,color=rrrrrr,fill=rrrrrr,fill opacity=0.2] (0.1,0.1) -- (0.35,0.6) -- (0.9,0.9) -- (0.6,0.35) -- cycle;
		\draw [line width=1.pt,color=rrrrrr] (0.1,0.1) -- (0.35,0.6);
		\draw [line width=1.pt,color=rrrrrr] (0.35,0.6) -- (0.9,0.9);
		\draw [line width=1.pt,color=rrrrrr] (0.9,0.9) -- (0.6,0.35);
		\draw [line width=1.pt,color=rrrrrr] (0.6,0.35) -- (0.1,0.1);
		\begin{scriptsize}
		\draw [fill=black] (0.3,0.7) circle (1pt);
		\draw[color=black] (0.25,0.75) node {$(r,x,y)$};
		\draw[color=rrrrrr] (0.5,0.5) node {$H_{p,g,v}$};
		\end{scriptsize}
		\end{tikzpicture}
	\end{minipage}
	\caption{Point~$p$ is above~$L_{g,v}(r,x,y)$ if and only if parameter~$(r,x,y)$ is above~$H_{p,g,v}$.}
	\label{fig:Lgv}
\end{figure}
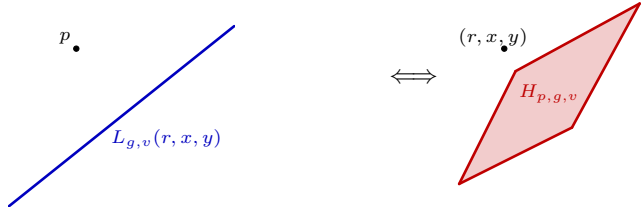

\begin{proof}
Let point~$p = (p_x, p_y)$ and vector~$v=(v_x,v_y)$. Now,~$(p_x,p_y)$ is above the line through~$(q_x, q_y)$ of gradient~$g$ if~$(p_x - q_x) - g \cdot (p_y - q_y) > 0$. Substituting the point~$(q_x, q_y) = (x,y) + r \cdot (v_1, v_2)$, we get the inequality
$$(p_2 - y - rv_2) - g \cdot (p_2 - x - rv_2) > 0.$$
This inequality can be rearranged into the form~$ax + by + cr + d > 0$, where
$$a = g,\ b=-1,\ c=(gv_1 - v_2),\ d = p_2 - gp_1.$$
In this form, we can see that the inequality is satisfied if and only if~$(r,x,y)$ lies above the hyperplane~$H_{p,g,v} :=  (ax + by + cr + d = 0)$, where~$a,b,c,d$ are given above. Hence, the two conditions,~$p$ above a line and~$(r,x,y)$ above a hyperplane, can be decided with the same inequality, which completes the proof.
\end{proof}

Now we are ready to address the subroutines.

\section{Subroutine~1}
\label{sec:subroutine_1}

Subroutine~1 decides whether a given point~$p$ is outside the~$k$-sided, regular polygon~$P_k(r,x,y)$. We present an~$O(\log k)$ time comparison-based algorithm and show that Property~2 holds.

\begin{lemma}
\label{lemma:subroutine_1}
Subroutine~1 has an~$O(\log k)$ time comparison-based algorithm, and comparisons in the algorithm that depend on the parameter~$(r,x,y)$ each have an associated critical hyperplane. 
\end{lemma}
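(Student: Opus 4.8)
The plan is to use the standard logarithmic-time point location in a convex polygon, namely an angular binary search from the centre, and then to verify Property~2 by showing that every comparison made is of the form handled by Lemma~\ref{lemma:get_crit}.

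First observe that the vertices of $P_k(r,x,y)$ are the points $(x,y) + r\cdot v_i$, where $v_i = (\sin(2\pi i/k), \cos(2\pi i/k))$ for $i = 0,\dots,k-1$ are fixed unit vectors that do not depend on the parameter $(r,x,y)$. Consequently the ray from the centre $(x,y)$ through vertex $i$ lies on a line of fixed gradient $g_i$ (the gradient of $v_i$) passing through $(x,y)$, and the edge joining vertices $i$ and $i+1$ lies on a line of fixed gradient $g_i'$ (the gradient of $v_{i+1}-v_i$) passing through the edge midpoint $(x,y) + r\cdot m_i$, where $m_i = \tfrac12(v_i + v_{i+1})$ is again a fixed vector. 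The key point is that all of these gradients and direction vectors are constants determined solely by $k$ and $i$, so that the parameter $(r,x,y)$ enters only through the centre and the scaling by $r$.

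The algorithm is then as follows. Using binary search over $i$, locate the angular wedge, bounded by two consecutive vertex rays, that contains $p$; each step of this search is a single comparison deciding whether $p$ lies above or below the vertex-ray line through some vertex $i$, and there are $O(\log k)$ such steps. Having identified the wedge between vertices $i$ and $i+1$, we make one final comparison testing whether $p$ lies on the outer side of the edge line through those two vertices. Since the polygon is convex and $p$ lies in this wedge, $p$ is outside $P_k(r,x,y)$ if and only if it lies strictly on the outer side of that single edge line. The total number of comparisons is $O(\log k)$, giving the claimed running time.

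It remains to exhibit a critical hyperplane for each comparison that depends on $(r,x,y)$. Every vertex-ray comparison asks whether $p$ is above or below the line of gradient $g_i$ through $(x,y) = (x,y) + r\cdot(0,0)$; this is exactly $L_{g_i,(0,0)}(r,x,y)$ in the notation of Lemma~\ref{lemma:get_crit}, so by that lemma the comparison has the associated critical hyperplane $H_{p,g_i,(0,0)}$. Likewise, the final edge comparison asks whether $p$ is above or below the line of gradient $g_i'$ through $(x,y) + r\cdot m_i$, which is $L_{g_i',m_i}(r,x,y)$, and hence has the associated critical hyperplane $H_{p,g_i',m_i}$. Applying Lemma~\ref{lemma:get_crit} to each of the $O(\log k)$ comparisons therefore establishes Property~2. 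I expect the only real obstacle to be the degenerate cases where a vertex ray or an edge line is vertical, so that its gradient is not a real number and Lemma~\ref{lemma:get_crit} does not apply verbatim; these are handled by treating a vertical line separately, where the comparison reduces to testing the sign of a quantity of the form $p_x - x - r\,v_x$, which is still linear in $(r,x,y)$ and hence still defines a critical hyperplane. A second, minor point is to fix an unambiguous tie-breaking convention for the angular search so that the wedge containing $p$ is well defined when $p$ lies exactly on a vertex ray; this does not affect the hyperplane argument.
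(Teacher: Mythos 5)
Your proposal is correct and follows essentially the same approach as the paper: an $O(\log k)$ binary search over the center-to-vertex lines (your angular wedges are the paper's triangular partitions extended to infinity), followed by a constant number of edge-side tests, with every parameter-dependent comparison reduced to Lemma~\ref{lemma:get_crit} using the fixed gradients and direction vectors of the unit polygon $P_k(1,0,0)$. The only differences are cosmetic (you anchor the edge line at the midpoint $m_i$ rather than at a vertex, and you explicitly flag the vertical-gradient and tie-breaking degeneracies, which the paper leaves implicit).
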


\begin{proof}
We partition the polygon~$P_k(r,x,y)$ into~$k$ triangles, and decide which partition the point~$p$ is in, if it indeed is in any of these partitions. For~$1 \leq i \leq k$, the~$i^{th}$ partition of~$P_k(r,x,y)$ is the triangle joining the~$i^{th}$ vertex, the~$(i+1)^{th}$ vertex and the center of~$P_k(r,x,y)$. Figure~\ref{fig:i_i+1} shows the~$i^{th}$ partition of~$P_k(r,x,y)$.

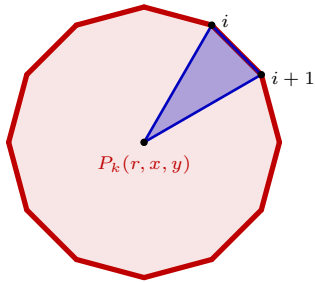
\begin{figure}[ht]
	\centering
	\begin{tikzpicture}[line cap=round,line join=round,>=triangle 45,x=1.0cm,y=1.0cm,scale=0.6]
	\clip(-4,-3.1) rectangle (4,4);
	\fill[line width=2.pt,color=rrrrrr,fill=rrrrrr,fill opacity=0.10000000149011612] (0.,3.) -- (-1.5,2.598076211353316) -- (-2.598076211353316,1.5) -- (-3.,0.) -- (-2.598076211353316,-1.5) -- (-1.5,-2.5980762113533156) -- (0.,-3.) -- (1.5,-2.5980762113533165) -- (2.5980762113533156,-1.5) -- (3.,0.) -- (2.598076211353317,1.5) -- (1.5,2.598076211353315) -- cycle;
	\draw [line width=2.pt,color=rrrrrr] (0.,3.)-- (-1.5,2.598076211353316);
	\draw [line width=2.pt,color=rrrrrr] (-1.5,2.598076211353316)-- (-2.598076211353316,1.5);
	\draw [line width=2.pt,color=rrrrrr] (-2.598076211353316,1.5)-- (-3.,0.);
	\draw [line width=2.pt,color=rrrrrr] (-3.,0.)-- (-2.598076211353316,-1.5);
	\draw [line width=2.pt,color=rrrrrr] (-2.598076211353316,-1.5)-- (-1.5,-2.5980762113533156);
	\draw [line width=2.pt,color=rrrrrr] (-1.5,-2.5980762113533156)-- (0.,-3.);
	\draw [line width=2.pt,color=rrrrrr] (0.,-3.)-- (1.5,-2.5980762113533165);
	\draw [line width=2.pt,color=rrrrrr] (1.5,-2.5980762113533165)-- (2.5980762113533156,-1.5);
	\draw [line width=2.pt,color=rrrrrr] (2.5980762113533156,-1.5)-- (3.,0.);
	\draw [line width=2.pt,color=rrrrrr] (3.,0.)-- (2.598076211353317,1.5);
	\draw [line width=2.pt,color=rrrrrr] (2.598076211353317,1.5)-- (1.5,2.598076211353315);
	\draw [line width=2.pt,color=rrrrrr] (1.5,2.598076211353315)-- (0.,3.);
	\fill[line width=2.pt,color=bbbbbb,fill=bbbbbb,fill opacity=0.3] (1.5,2.5980762113533147)-- (2.598076211353317,1.5) -- (0,0) -- cycle;
	\draw [line width=1.pt,color=bbbbbb] (1.5,2.5980762113533147)-- (2.598076211353317,1.5);
	\draw [line width=1.pt,color=bbbbbb] (0,0)-- (2.598076211353317,1.5);
	\draw [line width=1.pt,color=bbbbbb] (1.5,2.5980762113533147)-- (0,0);
	\begin{scriptsize}
	\draw [fill=black] (0.,0.) circle (2.0pt);
	\draw[color=rrrrrr] (0,-0.5) node {$P_k(r,x,y)$};
	\draw [fill=black] (1.5,2.5980762113533147) circle (2.0pt);
	\draw[color=black] (1.8,2.7) node {$i$};
	\draw [fill=black] (2.598076211353317,1.5) circle (2.0pt);
	\draw[color=black] (3.3,1.4) node {$i+1$};
	\end{scriptsize}
	\end{tikzpicture}	
	\caption{The~$i^{th}$ partition of~$P_k(r,x,y)$.}
	\label{fig:i_i+1}
\end{figure}

Assume for now that the point~$p$ is indeed in the polygon~$P_k(r,x,y)$ and hence in one of the~$k$ partitions. We decide whether~$p$ is in the $i^{th}$ partition for some $i \leq j$, or for some $i > j$, and perform a binary search for the index~$i$. This can be done by deciding if the point~$p$ is above, on, or below the line joining the center of~$P_k(r,x,y)$ and its~$j^{th}$ vertex. The comparison depends on~$(r,x,y)$, so we must compute its associated critical hyperplane using Lemma~\ref{lemma:get_crit}. Let $P_k(1,0,0)$ be the $k$-sided polygon of radius~1 and centered at the origin. Then set $g$ to be the gradient of the line joining the center to the~$i^{th}$ vertex of $P_k(1,0,0)$, and vector~$v = 0$ in Lemma~\ref{lemma:get_crit} to obtain the associated critical hyperplane.

We have searched for the partition that~$p$ is in if it is indeed in~$P_k(r,x,y)$. Hence, it only remains to decide whether~$p$ is indeed in that partition. This requires a constant number of comparisons, each of which depend on~$(r,x,y)$. We have already computed associated critical hyperplanes for two of the sides. The last side joins two adjacent vertices of the polygon~$P_k(r,x,y)$. Set~$g$ to be the gradient of the~$i^{th}$ side of polygon~$P_k(1,0,0)$, and the vector~$v$ to be the~$i^{th}$ vertex of $P_k(1,0,0)$, to obtain the final associated critical hyperplane.

The running time is dominated by the binary search for the~$i^{th}$ partition, which takes~$O(\log k)$ time.
\end{proof}

\section{Subroutine~2}
\label{sec:subroutine_2}

Subroutine~2 computes the relative clockwise order of four tangent lines drawn from two points to polygon~$P_k(r,x,y)$.

\begin{lemma}
\label{lemma:subroutine_2}
Subroutine~2 has an~$O(\log k)$-time comparison-based algorithm, and comparisons in the algorithm that depend on the parameter~$(r,x,y)$ each have an associated critical hyperplane.
\end{lemma}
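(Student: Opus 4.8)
The plan is to design an $O(\log k)$-time algorithm for Subroutine~2 by reducing the problem of ordering the four tangent lines to a sequence of pairwise angular comparisons, and then to show that each such comparison either is independent of $(r,x,y)$ or admits an associated critical hyperplane via Lemma~\ref{lemma:get_crit}. The key observation is that from a point $p$ outside $P_k(r,x,y)$ there are exactly two tangent lines, each touching the polygon at a vertex (generically). Identifying the tangent vertex for a given $p$ is essentially a binary search over the $k$ vertices: I would compare $p$ against the lines joining the polygon center $(x,y)$ to successive vertices, exactly as in the proof of Lemma~\ref{lemma:subroutine_1}, to locate the angular sector containing $p$, and then determine which vertex each tangent line touches. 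Since each such comparison tests whether $p$ lies above or below a line of fixed gradient through a point of the form $(x,y) + r\cdot v$ (with $v$ a scaled vertex direction of the unit polygon $P_k(1,0,0)$), Lemma~\ref{lemma:get_crit} applies directly to furnish a critical hyperplane. This locates the two tangent vertices for $p$ and, symmetrically, the two tangent vertices for $q$, in $O(\log k)$ comparisons total, each with a critical hyperplane.

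Once the four tangent vertices are identified, the remaining task is to sort the four tangent lines in clockwise order around the polygon. I would reduce this to comparing the directions (angles) of the four tangent lines. Each tangent line is determined by an external point ($p$ or $q$) together with a specific vertex $(x,y) + r\cdot v_i$ of the polygon; ordering two such directions clockwise amounts to a sign test of a cross product of the two direction vectors. The subtlety is that these direction vectors depend on $(r,x,y)$ nonlinearly through the tangent-vertex positions, so I must recast each angular comparison as a test of whether one external point lies above or below the line through a polygon vertex of a fixed gradient — a form to which Lemma~\ref{lemma:get_crit} again applies. Concretely, comparing the tangent line from $p$ through vertex $v_i$ against the tangent line from $q$ through vertex $v_j$ can be phrased as deciding on which side of the line spanned by one tangent $q$ lies, yielding a critical hyperplane. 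Since there are only $O(1)$ pairs among four lines, this sorting phase contributes $O(1)$ comparisons.

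The main obstacle I anticipate is expressing every angular comparison in the exact template of Lemma~\ref{lemma:get_crit}, namely as ``$p$ is above/on/below the line of gradient $g$ through $(x,y)+r\cdot v$'' for constants $g$ and $v$ that do not depend on $(r,x,y)$. The tangent directions themselves vary with the parameter because the tangent vertex moves as $(r,x,y)$ changes; the trick is that, once the binary search has fixed the \emph{combinatorial} tangent vertex $v_i$ of the unit polygon, the gradient $g$ of the corresponding side or center-to-vertex line is a fixed constant, and only the scale $r$ and translation $(x,y)$ enter linearly. I would therefore carefully argue that within each combinatorial cell (i.e.\ after the tangent vertices are fixed), all remaining comparisons become linear in $(r,x,y)$, so that the critical hyperplanes are well defined and computable in $O(1)$ time apiece. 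The running time is dominated by the two binary searches locating the tangent vertices, giving $O(\log k)$ overall, matching the claimed bound.
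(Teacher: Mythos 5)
Your route is genuinely different from the paper's, and most of it can be repaired, but as written the first phase has a concrete gap. Locating the angular sector of $p$ around the center --- the Subroutine~1 search you invoke --- does \emph{not} determine the tangent vertices of $p$: tangency depends on how far $p$ is from the polygon, not only on its direction from the center. A point just outside edge $i$ has tangent vertices $i$ and $i+1$, whereas a faraway point in the same angular sector has nearly antipodal tangent vertices. So ``then determine which vertex each tangent line touches'' is exactly the step that still needs an algorithm. It can be fixed: an edge is visible from $p$ precisely when $p$ lies strictly outside its supporting line; the visible edges form a contiguous cyclic arc whose endpoints are the two tangent vertices; the edge crossed by the ray from the center through $p$ (which your sector search does provide) is visible while the edge antipodal to it is not; and two further binary searches on the visibility predicate locate the endpoints of the arc. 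Each visibility test compares $p$ against a line of fixed gradient (a side gradient of $P_k(1,0,0)$) through $(x,y)+r\cdot v$ with $v$ a fixed vertex of $P_k(1,0,0)$, so Lemma~\ref{lemma:get_crit} supplies its critical hyperplane. A second soft spot is your ordering phase: a sign test on the cross product of two tangent \emph{directions} is quadratic in $(r,x,y)$ when both directions depend on the parameter, so it admits no critical hyperplane. The recasting you appeal to is available only because of a case split you never state: tangent lines touching \emph{distinct} vertices are ordered combinatorially by vertex index (no parameter-dependent comparison at all), while two tangent lines sharing a tangent vertex $v$ are ordered by $\det(p-v,\,q-v)=\det(p-v,\,q-p)$, which is affine in $(r,x,y)$ and is exactly the Lemma~\ref{lemma:get_crit} test of $p$ against the line of gradient $g_{pq}$ through $v$.

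It is worth contrasting this with the paper's proof, which sidesteps both issues. The paper never computes tangents from $p$ or $q$ at all. It finds the two vertices $i,j$ at which the tangents \emph{parallel to $pq$} touch the polygon; since translation and scaling do not change gradients, this binary search compares the fixed gradient of $pq$ against the fixed side gradients of $P_k(1,0,0)$, hence does not depend on $(r,x,y)$ and generates \emph{no} critical hyperplanes. The only parameter-dependent comparisons are the $O(1)$ tests placing $p$ and $q$ into the regions $L$, $R$, $U$, $D$ cut out by $t_i$, $t_j$ and the chord $ij$, each handled by Lemma~\ref{lemma:get_crit}, after which a five-case table reads off the order. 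Your version instead spends $O(\log k)$ parameter-dependent comparisons per call. Both versions satisfy Property~2, and both achieve the stated bound, because the formula used in Theorem~\ref{theorem:optimise_pk} charges every parallel step whether or not it produces a hyperplane; but the paper's invariant --- the combinatorial search is parameter-free and only a constant number of tests per call touch $(r,x,y)$ --- is the cleaner one, and it is what a sharper accounting of the parametric search would exploit.
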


\begin{proof}
Draw two lines~$t_i, t_j$ tangent to~$P_k(r,x,y)$ and parallel to~$pq$, and let the points of tangency be vertex~$i$ and vertex~$j$. If there are multiple points of tangency then choose any such point. Then without loss of generality, set~$ij$ to be horizontal, and assume further that~$p$ has a larger~$y$ coordinate than~$q$. Then the~$t_i, t_j$ and~$ij$ partition the plane into the four regions, as shown in Figure~\ref{fig:udlr}. Region~$L$ is left of both tangents,~$R$ is right of both tangents,~$U$ is between the tangents and above~$ij$, and~$D$ is between the tangents and below~$ij$.

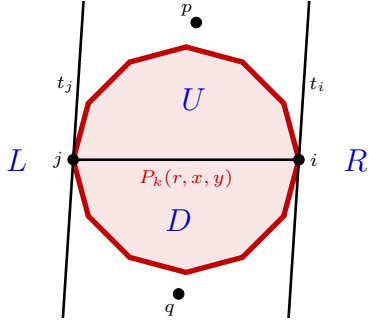
\begin{figure}[ht]
	\centering
	\begin{tikzpicture}[line cap=round,line join=round,>=triangle 45,x=0.5cm,y=0.5cm]
	\clip(-5,-4.2) rectangle (5,4.2);
	\fill[line width=2.pt,color=rrrrrr,fill=rrrrrr,fill opacity=0.10000000149011612] (0.,3.) -- (-1.5,2.598076211353316) -- (-2.598076211353316,1.5) -- (-3.,0.) -- (-2.598076211353316,-1.5) -- (-1.5,-2.5980762113533156) -- (0.,-3.) -- (1.5,-2.5980762113533165) -- (2.5980762113533156,-1.5) -- (3.,0.) -- (2.598076211353317,1.5) -- (1.5,2.598076211353315) -- cycle;
	\draw [line width=2.pt,color=rrrrrr] (0.,3.)-- (-1.5,2.598076211353316);
	\draw [line width=2.pt,color=rrrrrr] (-1.5,2.598076211353316)-- (-2.598076211353316,1.5);
	\draw [line width=2.pt,color=rrrrrr] (-2.598076211353316,1.5)-- (-3.,0.);
	\draw [line width=2.pt,color=rrrrrr] (-3.,0.)-- (-2.598076211353316,-1.5);
	\draw [line width=2.pt,color=rrrrrr] (-2.598076211353316,-1.5)-- (-1.5,-2.5980762113533156);
	\draw [line width=2.pt,color=rrrrrr] (-1.5,-2.5980762113533156)-- (0.,-3.);
	\draw [line width=2.pt,color=rrrrrr] (0.,-3.)-- (1.5,-2.5980762113533165);
	\draw [line width=2.pt,color=rrrrrr] (1.5,-2.5980762113533165)-- (2.5980762113533156,-1.5);
	\draw [line width=2.pt,color=rrrrrr] (2.5980762113533156,-1.5)-- (3.,0.);
	\draw [line width=2.pt,color=rrrrrr] (3.,0.)-- (2.598076211353317,1.5);
	\draw [line width=2.pt,color=rrrrrr] (2.598076211353317,1.5)-- (1.5,2.598076211353315);
	\draw [line width=2.pt,color=rrrrrr] (1.5,2.598076211353315)-- (0.,3.);
	\draw [line width=1.pt,color=black] (-3.,0.)-- (3.,0.);
	\draw [line width=1.pt,color=black,domain=-7.73603361203088:32.919135820740756] plot(\x,{(-30.390535595770043-10.130178531923347*\x)/-0.658965175256351});
	\draw [line width=1.pt,color=black,domain=-7.73603361203088:32.919135820740756] plot(\x,{(--36.644037616961384-12.214679205653795*\x)/-0.7945613394757755});
	\begin{scriptsize}
	\draw[color=rrrrrr] (0,-0.5) node {$P_k(r,x,y)$};
	\draw[color=bbbbbb] (0.2,1.6) node {\large $U$};
	\draw[color=bbbbbb] (-0.2,-1.6) node {\large $D$};
	\draw[color=bbbbbb] (-4.5,0) node {\large $L$};
	\draw[color=bbbbbb] (4.5,0) node {\large $R$};
	\draw[color=black] (3.5,2) node {$t_i$};
	\draw[color=black] (-3.2,2) node {$t_j$};
	\draw [fill=black] (3,0) circle (2.0pt);
	\draw[color=black] (3.4,0) node {$i$};
	\draw [fill=black] (-3,0) circle (2.0pt);
	\draw[color=black] (-3.4,0) node {$j$};
	\draw [fill=black] (0.2727377160866591,3.6505425902824435) circle (2.0pt);
	\draw[color=black] (0.01927714469451474,4.0) node {$p$};
	\draw [fill=black] (-0.19702014771574028,-3.570978676364149) circle (2.0pt);
	\draw[color=black] (-0.4383343595843128,-4.0) node {$q$};
	\end{scriptsize}
	\end{tikzpicture}
	\caption{The lines~$t_i$,~$t_j$,~$ij$ partition the plane into regions~$L, R, U, D$.}
	\label{fig:udlr}
\end{figure}

Then the relative clockwise order of the four lines drawn from~$p$ and~$q$ are determined by which of the four regions~$L$,~$R$,~$U$ or~$D$ the points~$p$ and~$q$ are located. 
See Figure~\ref{fig:pqpq}. 

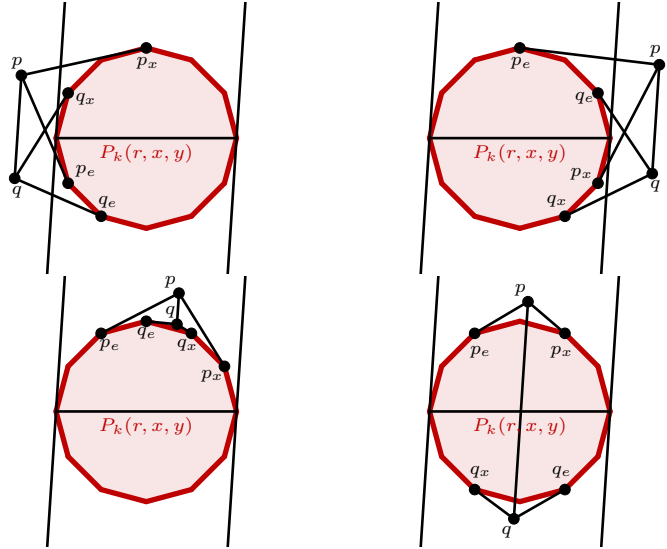
\begin{figure}[ht]
	\centering
	\begin{minipage}{0.4\textwidth}
		\begin{tikzpicture}[line cap=round,line join=round,>=triangle 45,x=0.4cm,y=0.4cm]
		\clip(-5,-4.5) rectangle (5,4.5);
		\fill[line width=2.pt,color=rrrrrr,fill=rrrrrr,fill opacity=0.10000000149011612] (0.,3.) -- (-1.5,2.598076211353316) -- (-2.598076211353316,1.5) -- (-3.,0.) -- (-2.598076211353316,-1.5) -- (-1.5,-2.5980762113533156) -- (0.,-3.) -- (1.5,-2.5980762113533165) -- (2.5980762113533156,-1.5) -- (3.,0.) -- (2.598076211353317,1.5) -- (1.5,2.598076211353315) -- cycle;
		\draw [line width=2.pt,color=rrrrrr] (0.,3.)-- (-1.5,2.598076211353316);
		\draw [line width=2.pt,color=rrrrrr] (-1.5,2.598076211353316)-- (-2.598076211353316,1.5);
		\draw [line width=2.pt,color=rrrrrr] (-2.598076211353316,1.5)-- (-3.,0.);
		\draw [line width=2.pt,color=rrrrrr] (-3.,0.)-- (-2.598076211353316,-1.5);
		\draw [line width=2.pt,color=rrrrrr] (-2.598076211353316,-1.5)-- (-1.5,-2.5980762113533156);
		\draw [line width=2.pt,color=rrrrrr] (-1.5,-2.5980762113533156)-- (0.,-3.);
		\draw [line width=2.pt,color=rrrrrr] (0.,-3.)-- (1.5,-2.5980762113533165);
		\draw [line width=2.pt,color=rrrrrr] (1.5,-2.5980762113533165)-- (2.5980762113533156,-1.5);
		\draw [line width=2.pt,color=rrrrrr] (2.5980762113533156,-1.5)-- (3.,0.);
		\draw [line width=2.pt,color=rrrrrr] (3.,0.)-- (2.598076211353317,1.5);
		\draw [line width=2.pt,color=rrrrrr] (2.598076211353317,1.5)-- (1.5,2.598076211353315);
		\draw [line width=2.pt,color=rrrrrr] (1.5,2.598076211353315)-- (0.,3.);
		\draw [line width=1.pt,color=black] (-3.,0.)-- (3.,0.);
		\draw [line width=1.pt,color=black,domain=-7.73603361203088:32.919135820740756] plot(\x,{(-30.390535595770043-10.130178531923347*\x)/-0.658965175256351});
		\draw [line width=1.pt,color=black,domain=-7.73603361203088:32.919135820740756] plot(\x,{(--36.644037616961384-12.214679205653795*\x)/-0.7945613394757755});
		\draw [line width=1.pt,color=black] (-4.373433184044156,-1.3367240527387239)-- (-2.598076211353317,1.5);
		\draw [line width=1.pt,color=black] (-2.598076211353316,-1.5)-- (-4.150604247123473,2.088793710331325);
		\draw [line width=1.pt,color=black] (-4.373433184044156,-1.3367240527387239)-- (-1.5,-2.5980762113533156);
		\draw [line width=1.pt,color=black] (-4.150604247123473,2.088793710331325)-- (0.,3.);
		\draw [line width=1.pt,color=black] (-4.150604247123473,2.088793710331325)-- (-4.373433184044156,-1.3367240527387239);
		\begin{scriptsize}
		\draw[color=rrrrrr] (0,-0.5) node {$P_k(r,x,y)$};
		\draw [fill=black] (-4.150604247123473,2.088793710331325) circle (2.0pt);
		\draw[color=black] (-4.3,2.5) node {$p$};
		\draw [fill=black] (-4.373433184044156,-1.3367240527387239) circle (2.0pt);
		\draw[color=black] (-4.3,-1.8) node {$q$};
		\draw [fill=black] (-2.598076211353317,1.5) circle (2.0pt);
		\draw[color=black] (-2.0,1.3) node {$q_x$};
		\draw [fill=black] (-2.598076211353316,-1.5) circle (2.0pt);
		\draw[color=black] (-2.0,-1.1) node {$p_e$};
		\draw [fill=black] (0.,3.) circle (2.0pt);
		\draw[color=black] (0.055404368716527444,2.5) node {$p_x$};
		\draw [fill=black] (-1.5,-2.5980762113533156) circle (2.0pt);
		\draw[color=black] (-1.269260512090605,-2.128992819546285) node {$q_e$};
		\end{scriptsize}
		\end{tikzpicture}
	\end{minipage}
	\begin{minipage}{0.4\textwidth}
		\begin{tikzpicture}[line cap=round,line join=round,>=triangle 45,x=0.4cm,y=0.4cm]
		\clip(-5,-4.5) rectangle (5,4.5);
		\fill[line width=2.pt,color=rrrrrr,fill=rrrrrr,fill opacity=0.10000000149011612] (0.,3.) -- (-1.5,2.598076211353316) -- (-2.598076211353316,1.5) -- (-3.,0.) -- (-2.598076211353316,-1.5) -- (-1.5,-2.5980762113533156) -- (0.,-3.) -- (1.5,-2.5980762113533165) -- (2.5980762113533156,-1.5) -- (3.,0.) -- (2.598076211353317,1.5) -- (1.5,2.598076211353315) -- cycle;
		\draw [line width=2.pt,color=rrrrrr] (0.,3.)-- (-1.5,2.598076211353316);
		\draw [line width=2.pt,color=rrrrrr] (-1.5,2.598076211353316)-- (-2.598076211353316,1.5);
		\draw [line width=2.pt,color=rrrrrr] (-2.598076211353316,1.5)-- (-3.,0.);
		\draw [line width=2.pt,color=rrrrrr] (-3.,0.)-- (-2.598076211353316,-1.5);
		\draw [line width=2.pt,color=rrrrrr] (-2.598076211353316,-1.5)-- (-1.5,-2.5980762113533156);
		\draw [line width=2.pt,color=rrrrrr] (-1.5,-2.5980762113533156)-- (0.,-3.);
		\draw [line width=2.pt,color=rrrrrr] (0.,-3.)-- (1.5,-2.5980762113533165);
		\draw [line width=2.pt,color=rrrrrr] (1.5,-2.5980762113533165)-- (2.5980762113533156,-1.5);
		\draw [line width=2.pt,color=rrrrrr] (2.5980762113533156,-1.5)-- (3.,0.);
		\draw [line width=2.pt,color=rrrrrr] (3.,0.)-- (2.598076211353317,1.5);
		\draw [line width=2.pt,color=rrrrrr] (2.598076211353317,1.5)-- (1.5,2.598076211353315);
		\draw [line width=2.pt,color=rrrrrr] (1.5,2.598076211353315)-- (0.,3.);
		\draw [line width=1.pt,color=black] (-3.,0.)-- (3.,0.);
		\draw [line width=1.pt,color=black,domain=-7.73603361203088:32.919135820740756] plot(\x,{(-30.390535595770043-10.130178531923347*\x)/-0.658965175256351});
		\draw [line width=1.pt,color=black,domain=-7.73603361203088:32.919135820740756] plot(\x,{(--36.644037616961384-12.214679205653795*\x)/-0.7945613394757755});
		\draw [line width=1.pt] (4.638213864457156,2.4412569415606393)-- (0.,3.);
		\draw [line width=1.pt,color=black] (4.638213864457156,2.4412569415606393)-- (4.402649476742617,-1.1800411071308936);
		\draw [line width=1.pt] (4.638213864457156,2.4412569415606393)-- (2.598076211353315,-1.5);
		\draw [line width=1.pt] (4.402649476742617,-1.1800411071308936)-- (2.598076211353317,1.5);
		\draw [line width=1.pt] (4.402649476742617,-1.1800411071308936)-- (1.5,-2.598076211353316);
		\begin{scriptsize}
		\draw[color=rrrrrr] (0,-0.5) node {$P_k(r,x,y)$};
		\draw [fill=black] (4.638213864457156,2.4412569415606393) circle (2pt);
		\draw[color=black] (4.5,2.9) node {$p$};
		\draw [fill=black] (4.402649476742617,-1.1800411071308936) circle (2pt);
		\draw[color=black] (4.5,-1.6) node {$q$};
		\draw [fill=black] (0.,3.) circle (2.0pt);
		\draw[color=black] (0.055404368716527444,2.5) node {$p_e$};
		\draw [fill=black] (2.598076211353315,-1.5) circle (2.0pt);
		\draw[color=black] (2.0664865059419015,-1.1776425869666178) node {$p_x$};
		\draw [fill=black] (2.598076211353317,1.5) circle (2.0pt);
		\draw[color=black] (2.162825770000602,1.3512630945742692) node {$q_e$};
		\draw [fill=black] (1.5,-2.598076211353316) circle (2.0pt);
		\draw[color=black] (1.2476027614429468,-2.0687807795095967) node {$q_x$};
		\end{scriptsize}
		\end{tikzpicture}
	\end{minipage}
	\begin{minipage}{0.4\textwidth}
		\begin{tikzpicture}[line cap=round,line join=round,>=triangle 45,x=0.4cm,y=0.4cm]
		\clip(-5,-4.5) rectangle (5,4.5);
		\fill[line width=2.pt,color=rrrrrr,fill=rrrrrr,fill opacity=0.10000000149011612] (0.,3.) -- (-1.5,2.598076211353316) -- (-2.598076211353316,1.5) -- (-3.,0.) -- (-2.598076211353316,-1.5) -- (-1.5,-2.5980762113533156) -- (0.,-3.) -- (1.5,-2.5980762113533165) -- (2.5980762113533156,-1.5) -- (3.,0.) -- (2.598076211353317,1.5) -- (1.5,2.598076211353315) -- cycle;
		\draw [line width=2.pt,color=rrrrrr] (0.,3.)-- (-1.5,2.598076211353316);
		\draw [line width=2.pt,color=rrrrrr] (-1.5,2.598076211353316)-- (-2.598076211353316,1.5);
		\draw [line width=2.pt,color=rrrrrr] (-2.598076211353316,1.5)-- (-3.,0.);
		\draw [line width=2.pt,color=rrrrrr] (-3.,0.)-- (-2.598076211353316,-1.5);
		\draw [line width=2.pt,color=rrrrrr] (-2.598076211353316,-1.5)-- (-1.5,-2.5980762113533156);
		\draw [line width=2.pt,color=rrrrrr] (-1.5,-2.5980762113533156)-- (0.,-3.);
		\draw [line width=2.pt,color=rrrrrr] (0.,-3.)-- (1.5,-2.5980762113533165);
		\draw [line width=2.pt,color=rrrrrr] (1.5,-2.5980762113533165)-- (2.5980762113533156,-1.5);
		\draw [line width=2.pt,color=rrrrrr] (2.5980762113533156,-1.5)-- (3.,0.);
		\draw [line width=2.pt,color=rrrrrr] (3.,0.)-- (2.598076211353317,1.5);
		\draw [line width=2.pt,color=rrrrrr] (2.598076211353317,1.5)-- (1.5,2.598076211353315);
		\draw [line width=2.pt,color=rrrrrr] (1.5,2.598076211353315)-- (0.,3.);
		\draw [line width=1.pt,color=black] (-3.,0.)-- (3.,0.);
		\draw [line width=1.pt,color=black,domain=-3.8428496566681756:13.568405208833058] plot(\x,{(-30.390535595770043-10.130178531923347*\x)/-0.658965175256351});
		\draw [line width=1.pt,color=black,domain=-3.8428496566681756:13.568405208833058] plot(\x,{(--36.644037616961384-12.214679205653795*\x)/-0.7945613394757755});
		\draw [line width=1.pt,color=black] (1.086865649136689,3.935036751215582)-- (1.019560032015581,2.900357009038221);
		\draw [line width=1.pt,color=black] (1.086865649136689,3.935036751215582)-- (2.598076211353317,1.5);
		\draw [line width=1.pt,color=black] (1.019560032015581,2.900357009038221)-- (1.5,2.5980762113533147);
		\draw [line width=1.pt,color=black] (1.019560032015581,2.900357009038221)-- (0.,3.);
		\draw [line width=1.pt,color=black] (1.086865649136689,3.935036751215582)-- (-1.5,2.5980762113533165);
		\begin{scriptsize}
		\draw[color=rrrrrr] (0,-0.5) node {$P_k(r,x,y)$};
		\draw [fill=black] (1.086865649136689,3.935036751215582) circle (2.0pt);
		\draw[color=black] (0.8,4.3) node {$p$};
		\draw [fill=black] (1.019560032015581,2.900357009038221) circle (2.0pt);
		\draw[color=black] (0.8,3.3) node {$q$};
		\draw [fill=black] (2.598076211353317,1.5) circle (2.0pt);
		\draw[color=black] (2.2,1.2) node {$p_x$};
		\draw [fill=black] (-1.5,2.5980762113533165) circle (2.0pt);
		\draw[color=black] (-1.2,2.2) node {$p_e$};
		\draw [fill=black] (1.5,2.5980762113533147) circle (2.0pt);
		\draw[color=black] (1.3761133036500017,2.2) node {$q_x$};
		\draw [fill=black] (0.,3.) circle (2.0pt);
		\draw[color=black] (0.018962259000666032,2.6) node {$q_e$};
		\end{scriptsize}
		\end{tikzpicture}
	\end{minipage}
	\begin{minipage}{0.4\textwidth}
		\begin{tikzpicture}[line cap=round,line join=round,>=triangle 45,x=0.4cm,y=0.4cm]
		\clip(-5,-4.5) rectangle (5,4.5);
		\fill[line width=2.pt,color=rrrrrr,fill=rrrrrr,fill opacity=0.10000000149011612] (0.,3.) -- (-1.5,2.598076211353316) -- (-2.598076211353316,1.5) -- (-3.,0.) -- (-2.598076211353316,-1.5) -- (-1.5,-2.5980762113533156) -- (0.,-3.) -- (1.5,-2.5980762113533165) -- (2.5980762113533156,-1.5) -- (3.,0.) -- (2.598076211353317,1.5) -- (1.5,2.598076211353315) -- cycle;
		\draw [line width=2.pt,color=rrrrrr] (0.,3.)-- (-1.5,2.598076211353316);
		\draw [line width=2.pt,color=rrrrrr] (-1.5,2.598076211353316)-- (-2.598076211353316,1.5);
		\draw [line width=2.pt,color=rrrrrr] (-2.598076211353316,1.5)-- (-3.,0.);
		\draw [line width=2.pt,color=rrrrrr] (-3.,0.)-- (-2.598076211353316,-1.5);
		\draw [line width=2.pt,color=rrrrrr] (-2.598076211353316,-1.5)-- (-1.5,-2.5980762113533156);
		\draw [line width=2.pt,color=rrrrrr] (-1.5,-2.5980762113533156)-- (0.,-3.);
		\draw [line width=2.pt,color=rrrrrr] (0.,-3.)-- (1.5,-2.5980762113533165);
		\draw [line width=2.pt,color=rrrrrr] (1.5,-2.5980762113533165)-- (2.5980762113533156,-1.5);
		\draw [line width=2.pt,color=rrrrrr] (2.5980762113533156,-1.5)-- (3.,0.);
		\draw [line width=2.pt,color=rrrrrr] (3.,0.)-- (2.598076211353317,1.5);
		\draw [line width=2.pt,color=rrrrrr] (2.598076211353317,1.5)-- (1.5,2.598076211353315);
		\draw [line width=2.pt,color=rrrrrr] (1.5,2.598076211353315)-- (0.,3.);
		\draw [line width=1.pt,color=black] (-3.,0.)-- (3.,0.);
		\draw [line width=1.pt,color=black,domain=-7.73603361203088:32.919135820740756] plot(\x,{(-30.390535595770043-10.130178531923347*\x)/-0.658965175256351});
		\draw [line width=1.pt,color=black,domain=-7.73603361203088:32.919135820740756] plot(\x,{(--36.644037616961384-12.214679205653795*\x)/-0.7945613394757755});
		\draw [line width=1.pt,color=black] (0.2727377160866591,3.6505425902824435)-- (-0.19702014771574028,-3.570978676364149);
		\draw [line width=1.pt,color=black] (0.2727377160866591,3.6505425902824435)-- (-1.5,2.5980762113533165);
		\draw [line width=1.pt,color=black] (0.2727377160866591,3.6505425902824435)-- (1.5,2.5980762113533147);
		\draw [line width=1.pt,color=black] (-0.19702014771574028,-3.570978676364149)-- (-1.5,-2.5980762113533156);
		\draw [line width=1.pt,color=black] (-0.19702014771574028,-3.570978676364149)-- (1.5,-2.598076211353316);
		\begin{scriptsize}
		\draw[color=rrrrrr] (0,-0.5) node {$P_k(r,x,y)$};
		\draw [fill=black] (0.2727377160866591,3.6505425902824435) circle (2.0pt);
		\draw[color=black] (0.01927714469451474,4.048762488217882) node {$p$};
		\draw [fill=black] (-0.19702014771574028,-3.570978676364149) circle (2.0pt);
		\draw[color=black] (-0.4383343595843128,-4.0) node {$q$};
		\draw [fill=black] (-1.5,2.5980762113533165) circle (2.0pt);
		\draw[color=black] (-1.3,2) node {$p_e$};
		\draw [fill=black] (1.5,2.5980762113533147) circle (2.0pt);
		\draw[color=black] (1.4,2) node {$p_x$};
		\draw [fill=black] (-1.5,-2.5980762113533156) circle (2.0pt);
		\draw[color=black] (-1.3,-2) node {$q_x$};
		\draw [fill=black] (1.5,-2.598076211353316) circle (2.0pt);
		\draw[color=black] (1.4,-2) node {$q_e$};
		\end{scriptsize}
		\end{tikzpicture}

	\end{minipage}
	\caption{The relative orders shown for when~$(i)$~$p,q \in L$,~$(ii)$~$p,q \in R$,~$((iii)$~$p,q \in U$ and~$(iv)$~$p \in U, q \in D$.}
	\label{fig:pqpq}
\end{figure}

Five cases follows. Let~$p_e$ and~$p_x$ points of tangency from~$p$ such that the points~$p_e,p,p_x$ are in clockwise order. If~$p,q$ are in the same region, then the containing region~$L$,~$R$,~$U$, and~$D$ correspond to the relative clockwise orders~$q_e p_e q_x p_x$,~$p_e q_e p_x q_x$,~$p_e q_e q_x p_x$, and~$q_e p_e p_x q_x$ respectively. If~$p,q$ are in different regions, then they must be in~$U$ and~$D$ respectively, and the relative order is~$p_e p_x q_e q_x$. The proof for case analysis for the five cases is omitted, but the diagrams in Figure~\ref{fig:pqpq} may be useful for the reader.

The running time of the algorithm is as follows. Given the gradient of~$pq$, there is an~$O(\log k)$ time algorithm to binary search the gradients of the sides of~$P_k(r,x,y)$ to compute the vertices~$i$ and~$j$. Then the remainder of the algorithm takes constant time: rotating the diagram so that~$ij$ is horizontal, deciding whether~$p$ or~$q$ has a larger~$y$ coordinate, and computing the region~$L,R,U,D$ that points~$p,q$ are in.

The proof of existence of critical hyperplanes is as follows. Since the gradients of~$pq$ and sides of~$P_k$ do not depend on~$(r,x,y)$, computing~$i$ and~$j$ generates no critical hyperplanes. Similarly, rotating the diagram so that~$ij$ is horizontal and then deciding which of~$p$ or~$q$ have larger~$y$ coordinates also generates no critical hyperplanes. It only remains to decide which of the four regions~$L,R,U,D$ the point~$p$, and respectively~$q$, is in. Set~$g$ to the gradient of~$pq$ and vector~$v$ to be the~$i^{th}$ vertex of~$P_k(1,0,0)$ in Lemma~\ref{lemma:get_crit} to decide if~$p$ is to the left of the tangent through~$i$. Do so similarly for~$j$ to decide if~$p$ is to the right of the tangent through~$j$. Finally, set~$g$ to the gradient of~$ij$ and vector~$v$ to be either the~$i^{th}$ or~$j^{th}$ vertex of~$P_k(1,0,0)$ to decide if~$p$ is above the chord~$ij$.
\end{proof}

Checking that Property~2 holds for the comparison-based subroutines, Subroutine~1 and Subroutine~2, completes the proof to Theorem~2. In the final section we will prove that Theorem~2 implies that we have an efficient algorithm for computing the yolk in the $\mathcal L_1$ and $\mathcal L_\infty$ meetrics, and an efficient approximation algorithm for the $\mathcal L_2$ metric.

\section{Computing the Yolk in~$\mathcal L_1, \mathcal L_2$, and~$\mathcal L_\infty$}
\label{sec:corollary}
It remains to show that our general problem for~$P_k(r,x,y)$ implies the results as claimed in the introduction. 

\begin{theorem}
Given a set~$V$ of~$n$ points in the plane, there is an~$O(n \log^7 n)$ time algorithm to compute the yolk of~$V$ in the~$\mathcal L_1$ and~$\mathcal L_\infty$ metrics.
\end{theorem}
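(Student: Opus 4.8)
The plan is to instantiate the general optimisation result of Theorem~\ref{theorem:optimise_pk} at $k = 4$ and to observe that both square-shaped balls arise as regular $4$-gons $P_4(r,x,y)$, possibly after a rigid rotation of the input.

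First I would handle the $\mathcal L_1$ metric directly. Unwinding the definition of $P_k(r,x,y)$ at $k=4$: the top vertex is $(x,y+r)$ and the remaining three vertices are spaced $90^\circ$ apart around the circle of radius $r$, giving vertices at $(x,y+r)$, $(x+r,y)$, $(x,y-r)$ and $(x-r,y)$. This is precisely the $\mathcal L_1$ disk $\{(a,b) : |a-x| + |b-y| \le r\}$ of radius $r$ centred at $(x,y)$, i.e. the $45^\circ$-rotated square of Figure~\ref{fig:l1_yolk}. Hence the $\mathcal L_1$ yolk is exactly the minimum-$r$ polygon $P_4(r,x,y)$ intersecting all median lines, and Theorem~\ref{theorem:optimise_pk} computes it in $O(n \log^7 n \cdot \log^4 4)$ time. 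Since $\log^4 4$ is a constant, this is $O(n \log^7 n)$.

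For the $\mathcal L_\infty$ metric, the ball is the axis-parallel square $\{(a,b) : \max(|a-x|,|b-y|) \le r\}$, and a $45^\circ$ rotation turns an axis-parallel square into an $\mathcal L_1$-diamond. I would therefore reduce to the previous case by rotating the input: let $V' = R_{45^\circ}(V)$, where $R_{45^\circ}$ denotes rotation about the origin. Because $R_{45^\circ}$ is a bijective isometry, $m$ is a median line of $V$ if and only if $R_{45^\circ}(m)$ is a median line of $V'$, since rotation sends each closed half-plane to a closed half-plane while preserving the number of points of $V$ it contains. Moreover an axis-parallel square $S$ intersects $m$ if and only if the diamond $R_{45^\circ}(S)$ intersects $R_{45^\circ}(m)$. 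Hence the smallest axis-parallel square meeting all median lines of $V$ corresponds under $R_{45^\circ}$ to the smallest $P_4$-diamond meeting all median lines of $V'$. I would run Theorem~\ref{theorem:optimise_pk} with $k=4$ on $V'$, obtaining the optimal diamond in $O(n \log^7 n)$ time, and rotate it back by $R_{-45^\circ}$ to recover the $\mathcal L_\infty$ yolk; the returned circumradius $r$ equals $\sqrt 2$ times the $\mathcal L_\infty$ radius (the half-side length), so the two objectives differ only by a fixed scale and the optimum is preserved.

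The routine content is the two geometric identifications — that $P_4(r,x,y)$ is the $\mathcal L_1$ ball and that the $\mathcal L_\infty$ ball is its $45^\circ$ rotation — together with the radius conversion bookkeeping. The one point needing care, and the main thing I would verify explicitly, is that rotation preserves the set of median lines, so that optimising over median lines of $V'$ genuinely solves the $\mathcal L_\infty$ problem for $V$; this follows since rotation is an orientation-preserving bijection that carries half-planes to half-planes and fixes point counts. The decisive quantitative step is simply that fixing $k=4$ collapses the $\log^4 k$ factor in Theorem~\ref{theorem:optimise_pk} to a constant, which is what yields the claimed $O(n \log^7 n)$ running time.
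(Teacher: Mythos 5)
Your proposal is correct and follows essentially the same route as the paper: instantiate Theorem~\ref{theorem:optimise_pk} at $k=4$ (so the $\log^4 k$ factor is constant), identify $P_4(r,x,y)$ with the $\mathcal L_1$ ball, and reduce $\mathcal L_\infty$ to this case by a $45^\circ$ rotation of the input followed by rotating the answer back. Your explicit verification that rotation preserves median lines and your $\sqrt 2$ radius-scaling remark are details the paper leaves implicit, but the argument is the same.
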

\begin{proof}
Setting~$k=4$ in Theorem~\ref{theorem:optimise_pk} gives an algorithm to compute the smallest~$P_4(r,x,y)$ that intersects all median lines of~$V$ in~$O(n \log^7 n)$ time. This rotated square coincides with yolk in the~$\mathcal L_1$ metric, refer to Figure~\ref{fig:l1_yolk} and Definition~1.

Computing the yolk in the~$\mathcal L_\infty$ metric requires one extra step. Rotate the points of~$V$ by~$45^\circ$ clockwise, compute the smallest~$P_4(r,x,y)$, and then rotate the square~$P_4(r,x,y)$ back~$45^\circ$ anticlockwise to obtain the yolk in the~$\mathcal L_\infty$ metric. 
\end{proof}

\begin{theorem}
Given a set~$V$ of~$n$ points in the plane and an~$\varepsilon > 0$, there is an~$O(n \log^7 n \cdot \log^4 \frac 1 \varepsilon)$ time algorithm to compute a~$(1+\varepsilon)$-approximation of the yolk in the~$\mathcal L_2$ metric.
\end{theorem}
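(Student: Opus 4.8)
The plan is to approximate the $\mathcal L_2$ disk (the yolk) by a regular $k$-sided polygon and quantify the error as a function of $k$. The key geometric observation is that a regular $k$-sided polygon inscribed in a circle of radius $r$ is sandwiched between the circle of radius $r \cos(\pi/k)$ (the inradius) and the circumscribed circle of radius $r$. First I would run the algorithm of Theorem~\ref{theorem:optimise_pk} to compute, in $O(n \log^7 n \cdot \log^4 k)$ time, the minimum radius $r^*$ such that $P_k(r^*, x, y)$ intersects all median lines of $V$ for some center $(x,y)$. I would then argue that this polygon yields an approximation to the true yolk radius $\rho^*$ whose quality is controlled by the ratio of circumradius to inradius, namely $1/\cos(\pi/k)$.

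The main analytic step is the two-sided comparison. On one side, since $P_k(r^*,x,y)$ is contained in the disk of radius $r^*$ about $(x,y)$, and $P_k$ intersects every median line, the disk of radius $r^*$ also intersects every median line; hence $\rho^* \le r^*$. On the other side, the optimal $\mathcal L_2$ yolk is a disk $B$ of radius $\rho^*$ intersecting all median lines. A regular $k$-gon circumscribed so that its inscribed circle equals $B$ has circumradius $\rho^*/\cos(\pi/k)$ and contains $B$, so it too intersects all median lines; therefore $r^* \le \rho^*/\cos(\pi/k)$. Combining the two inequalities gives
\begin{equation}
\rho^* \le r^* \le \frac{\rho^*}{\cos(\pi/k)}.
\end{equation}
Thus $r^*$ is an approximation of $\rho^*$ with multiplicative error at most $1/\cos(\pi/k)$.

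It then remains to choose $k$ so that $1/\cos(\pi/k) \le 1+\varepsilon$ and to verify the running time. Using the Taylor estimate $\cos(\pi/k) \ge 1 - \pi^2/(2k^2)$, one sees that $1/\cos(\pi/k) \le 1 + \varepsilon$ holds once $k = \Theta(1/\sqrt{\varepsilon})$; equivalently $k = O(1/\sqrt{\varepsilon})$ suffices. Substituting this choice of $k$ into the running time $O(n \log^7 n \cdot \log^4 k)$ from Theorem~\ref{theorem:optimise_pk}, and noting that $\log^4 k = \log^4 O(1/\sqrt\varepsilon) = O(\log^4 \frac{1}{\varepsilon})$, yields the claimed bound of $O(n \log^7 n \cdot \log^4 \frac 1 \varepsilon)$.

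The step I expect to require the most care is the lower-bound direction $r^* \le \rho^*/\cos(\pi/k)$, because it relies on choosing the correct scaling and orientation of the circumscribed $k$-gon relative to the optimal disk $B$, and on confirming that $P_k(r,x,y)$ as defined (with a fixed vertex at the top) can realize any such circumscribing polygon up to the guaranteed inradius/circumradius ratio. Since $P_k$ has a fixed rotational orientation, I would need to check that the inscribed-circle argument does not secretly require rotating the polygon; fortunately the bound $1/\cos(\pi/k)$ holds for every orientation, so the fixed orientation of $P_k$ costs nothing in the worst case. The rest is routine substitution.
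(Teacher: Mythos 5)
Your proposal is correct and follows essentially the same route as the paper: run the algorithm of Theorem~\ref{theorem:optimise_pk} for a suitable~$k$, bound the output disk's radius by sandwiching the optimal~$\mathcal L_2$ yolk via the inradius/circumradius ratio $\sec\frac{\pi}{k}$ of a circumscribed regular $k$-gon, exactly as the paper does with~$P_k(r_2 \cdot \sec \frac \pi k, x_2, y_2)$. The only difference is quantitative: the paper uses the cruder bound $\sec\theta \leq \frac{1}{1-\theta}$ and takes $k \approx \pi(1+\frac{1}{\varepsilon})$, whereas your Taylor estimate justifies the smaller choice $k = \Theta(1/\sqrt{\varepsilon})$ --- both yield the claimed $O(n \log^7 n \cdot \log^4 \frac{1}{\varepsilon})$ running time since $\log^4 k = O(\log^4 \frac{1}{\varepsilon})$ either way.
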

\begin{proof}
Setting~$k \approx \pi \cdot (1+\frac 1 \varepsilon)$ in Theorem~\ref{theorem:optimise_pk} gives an algorithm to compute the smallest~$P_k(r,x,y)$ that intersects all median lines of~$V$ in the desired running time. It suffices to show that for this parameter set~$(r,x,y)$, the disk centered at~$(x,y)$ with radius~$r$ is a~$(1 + \varepsilon)$-approximation for the yolk in the~$\mathcal L_2$ metric.

First, note that~$P_k(r,x,y)$ intersects all median lines, and~$B(r,x,y)$ encloses~$P_k(r,x,y)$, so the disk must also intersect all median lines of~$V$. Hence, it suffices to show that the radius~$r$ of~$B(r,x,y)$ satisfies~$r \leq (1+\varepsilon) \cdot r_2$, where~$r_2$ is the radius of the true yolk in the~$\mathcal L_2$ metric.

Let the yolk in the~$\mathcal L_2$ metric be the disk~$B(r_2, x_2, y_2)$. Consider the regular,~$k$-sided polygon~$P_k(r_2 \cdot \sec \frac \pi k,x_2,y_2)$, so that by construction, all sides of this polygon are tangent to~$B(r_2,x_2,y_2)$.

\begin{figure}[ht]
	\centering
	\begin{tikzpicture}[line cap=round,line join=round,>=triangle 45,x=0.6cm,y=0.6cm]
	\clip(-3,-3) rectangle (3,3.1);
	\fill[line width=2.pt,color=rrrrrr,fill=rrrrrr,fill opacity=0.10000000149011612] (0.,3.) -- (-1.5,2.598076211353316) -- (-2.598076211353316,1.5) -- (-3.,0.) -- (-2.598076211353316,-1.5) -- (-1.5,-2.5980762113533156) -- (0.,-3.) -- (1.5,-2.5980762113533165) -- (2.5980762113533156,-1.5) -- (3.,0.) -- (2.598076211353317,1.5) -- (1.5,2.598076211353315) -- cycle;
	\draw [line width=2.pt,color=rrrrrr] (0.,3.)-- (-1.5,2.598076211353316);
	\draw [line width=2.pt,color=rrrrrr] (-1.5,2.598076211353316)-- (-2.598076211353316,1.5);
	\draw [line width=2.pt,color=rrrrrr] (-2.598076211353316,1.5)-- (-3.,0.);
	\draw [line width=2.pt,color=rrrrrr] (-3.,0.)-- (-2.598076211353316,-1.5);
	\draw [line width=2.pt,color=rrrrrr] (-2.598076211353316,-1.5)-- (-1.5,-2.5980762113533156);
	\draw [line width=2.pt,color=rrrrrr] (-1.5,-2.5980762113533156)-- (0.,-3.);
	\draw [line width=2.pt,color=rrrrrr] (0.,-3.)-- (1.5,-2.5980762113533165);
	\draw [line width=2.pt,color=rrrrrr] (1.5,-2.5980762113533165)-- (2.5980762113533156,-1.5);
	\draw [line width=2.pt,color=rrrrrr] (2.5980762113533156,-1.5)-- (3.,0.);
	\draw [line width=2.pt,color=rrrrrr] (3.,0.)-- (2.598076211353317,1.5);
	\draw [line width=2.pt,color=rrrrrr] (2.598076211353317,1.5)-- (1.5,2.598076211353315);
	\draw [line width=2.pt,color=rrrrrr] (1.5,2.598076211353315)-- (0.,3.);
	\draw [line width=1.pt, color=bbbbbb, fill opacity = 0.1] (0.,0.) circle (2.8977774788672046);
	\draw [line width=1.pt, color=rrrrrr] (0.,0.)-- (0.,3.);
	\draw [line width=1.pt, color=bbbbbb] (0.,0.)-- (0.75,2.799038105676657);
	\begin{scriptsize}
	\draw [fill=black] (0.,0.) circle (2.0pt);
	\draw[color=rrrrrr] (0,-1) node {$P_k(r_2\cdot \sec \frac \pi k,x_2,y_2)$};
	\draw[color=rrrrrr] (-1.1,1.5) node {$r_2\cdot \sec \frac \pi k$};
	\draw[color=bbbbbb] (0.7,1.4) node {$r$};
	\end{scriptsize}
	\end{tikzpicture}
	\caption{The polygon~$P_k(r_2 \cdot \sec \frac \pi k,x_2,y_2)$ is externally tangent to the disk~$B(r_2,x_2,y_2)$.}
	\label{fig:sec}
\end{figure}
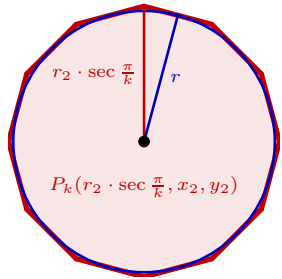

Now since~$B(r_2,x_2,y_2)$ is the~$\mathcal L_2$ yolk, it intersects all median lines and so does its enclosing polygon~$P_k(r_2 \cdot \sec \frac \pi k,x_2,y_2)$. By the minimality of~$P_k(r,x,y)$, we get~$r \leq \sec \frac \pi k \cdot r_2$. But for~$\theta \in [0,\frac \pi 3]$, we have~$\sec \theta \leq \frac 1 {1-\theta}$. So,
\[
	\sec \frac \pi k \leq \frac 1 { 1 - \frac \pi k} \leq 1+\varepsilon,
\]
which implies that~$r \leq (1+\varepsilon) \cdot r_2$, as required.
\end{proof}

\section{Concluding Remarks}

Cole's~\cite{cole1987slowing} extension to parametric search states that the running time of the parametric search may be reduced if certain comparisons are delayed. This is a direction for further research that could potentially improve the running time of our algorithms.

An open problem is whether one can compute the yolk in higher dimensions without precomputing all median hyperplanes. Avoiding the computation of median hyperplanes yields even greater benefits as less is known about bounds on the number of median hyperplanes in higher dimensions.

Similarly, our approximation algorithm for the $\mathcal L_2$~yolk in the plane is optimal up to polylogarithmic factors, however, it is an open problem as to whether there is a near-linear time exact algorithm. Our attempts to apply Megiddo's parametric search technique to the~$\mathcal L_2$~yolk have been unsuccessful so far.

Finally, there are other solution concepts in computational spatial voting that currently lack efficient algorithms. The shortcomings of existing algorithms are: for the Shapley Owen power score there is only an approximate algorithm~\cite{godfrey2005computation}, for the Finagle point only regular polygons have been considered~\cite{wuffle1989finagle} and for the~$\varepsilon$-core only a membership algorithm exists~\cite{tovey2010finite}. Since these problems have a close connection to either median lines or minimal radius, we suspect that Megiddo's parametric search technique could also be useful for these problems.

\bibliographystyle{plain}
\bibliography{AAAI19-ComputeYolk.bib}

\end{document}